\newtheorem{theorem}{Theorem}
\newtheorem{lemma}{Lemma}
\newcommand{\appref}[1]{\hyperref[#1]{{Appendix~\ref*{#1}}}}
\newcommand{\be}{\begin{eqnarray} \begin{aligned}}
\newcommand{\ee}{\end{aligned} \end{eqnarray} }
\newcommand{\benn}{\begin{eqnarray*} \begin{aligned}}
\newcommand{\eenn}{\end{aligned} \end{eqnarray*}}
\begin{document}

\title{Quantifying Coherence and Entanglement via Simple Measurements}

\author{Graeme Smith}
\affiliation{JILA, University of Colorado/NIST, 440 UCB, Boulder, CO 80309, USA}
\affiliation{Center for Theory of Quantum Matter, University of Colorado, Boulder, Colorado 80309, USA}
\affiliation{Department of Physics, University of Colorado, 390 UCB, Boulder, CO 80309, USA}
\author{John A. Smolin}
\affiliation{IBM T.J. Watson Research Center, 1101  Kitchawan Road, Yorktown Heights, NY 10598}
\author{Xiao Yuan}
\affiliation{Center for Quantum Information, Institute for Interdisciplinary Information Sciences, Tsinghua University, Beijing, 100084 China}
\author{Qi Zhao}
\affiliation{Center for Quantum Information, Institute for Interdisciplinary Information Sciences, Tsinghua University, Beijing, 100084 China}
\author{Davide Girolami}
\affiliation{Department of Atomic and Laser Physics, University of Oxford, Parks Road, Oxford OX1 3PU}
\author{Xiongfeng Ma}
\affiliation{Center for Quantum Information, Institute for Interdisciplinary Information Sciences, Tsinghua University, Beijing, 100084 China}

\begin{abstract}
  Coherence and entanglement are fundamental properties of quantum systems, promising to power the near future quantum computers, sensors and simulators. Yet, their experimental detection is challenging, usually requiring full reconstruction of the  system state. We show that one can extract quantitative bounds  to the relative entropy of coherence and the coherent information, coherence and entanglement quantifiers respectively,  by a limited number of purity measurements. The scheme is readily implementable with current technology to verify quantum computations in large scale registers, without carrying out expensive state tomography.
\end{abstract}

\maketitle
The superposition principle is one of the pillars of quantum
mechanics. Coherent superpositions of multipartite states can yield
entanglement, a property of such states that are nonfactorizable.
Entanglement has been extensively investigated \cite{Horodecki09},
having being identified as a key property since the pioneering studies
in quantum communication and cryptography protocols, and for quantum
computational speed-up \cite{dense,Shor97,Ekert91,telep}.
Entanglement is also crucial in quantum condensed matter theory,
because it underpins fundamental properties of many-body systems such
as critical behaviors \cite{vedral} as well as improved metrology
beyond the standard quantum limit \cite{metrorev}. Once confined to
suggestive thought experiments, highly coherent quantum systems are
nowadays observed and manipulated in the laboratory.

The quantification of entanglement is thus of great interest.  Except
for the case of bipartite pure states~\footnote{The von Neumann entropy of the
reduced density matrix of either subsystem is an entanglement measure for bipartite pure states} quantifying entanglement is complicated.
For mixed states, there are many different measures (see
\cite{Measures} for a review).  Depending on one's purpose, one of
more of these may be appropriate. They include the distillable
entanglement \cite{BDSW}, the entanglement of formation \cite{BDSW},
the entanglement cost \cite{hayden2001asymptotic}, the concurrence
\cite{wootters}, and the log negativity \cite{vidal2002computable}.
The most informative entanglement measures have some operational
meaning.  For example, the distillable entanglement is how much pure
entanglement can be extracted from the quantum state in question,
while the entanglement cost is how much pure entanglement is needed to
create a state asymptotically.  Because of their asymptotic nature,
these can be extremely difficult to calculate.  Indeed, there seems to
be a tradeoff between operational meaning and computational
accessibility.  The log negativity, for example, is simple to
calculate given a full description of a quantum state, its density
matrix, but lacks operational interpretation.

In general, it is easier to detect, or certify, entanglement than to
quantify it.  A heavily employed solution is to verify quantum
non-locality by violation of Bell inequalities, which is observed in a
subset of the entangled states \cite{entdet}. Another possibility is
to design and measure witnesses \cite{terhal}.  These are observables
which verify the presence of entanglement whenever their value is
above (or below) a threshold. For example, entanglement-enhanced
precision in phase estimation protocols is certified by super-linear
scaling of the quantum Fisher information
\cite{oberthaler,zoller,hefei}.  These customary detection methods
usually certify nonseparability but offer no quantitative information
about the usefulness of the entanglement present, and often fail to
detect the entanglement at all.  In this work we seek to quantify
entanglement rather than merely detect it, without requiring extraordinarily
difficult experiments.

This most direct way to estimate the entanglement of a state is 
by full tomographic reconstruction. This
task is computationally and experimentally challenging, scaling
exponentially with the number of qubits.  It is therefore desirable to
have a way to quantify the entanglement present in a system without
doing full tomography.  In Ref.~\cite{alves2004multipartite}, it was proposed how to directly
confirm the presence of entanglement by interfering two identical
copies of a state and extracting a quantity $\gamma=\mathrm{Tr}
\rho^2$ which is known as the \emph{purity} of a quantum state.
In Ref.~\cite{Experiment}, such an experiment was carried out on ultracold
bosonic atoms in an optical lattice. Using the methods developed
below, it is possible to not only to verify that the systems were
entangled, but quantify the amount of entanglement present.

\begin{figure}
\subfigure[] {
 \includegraphics[width=1.5 in]{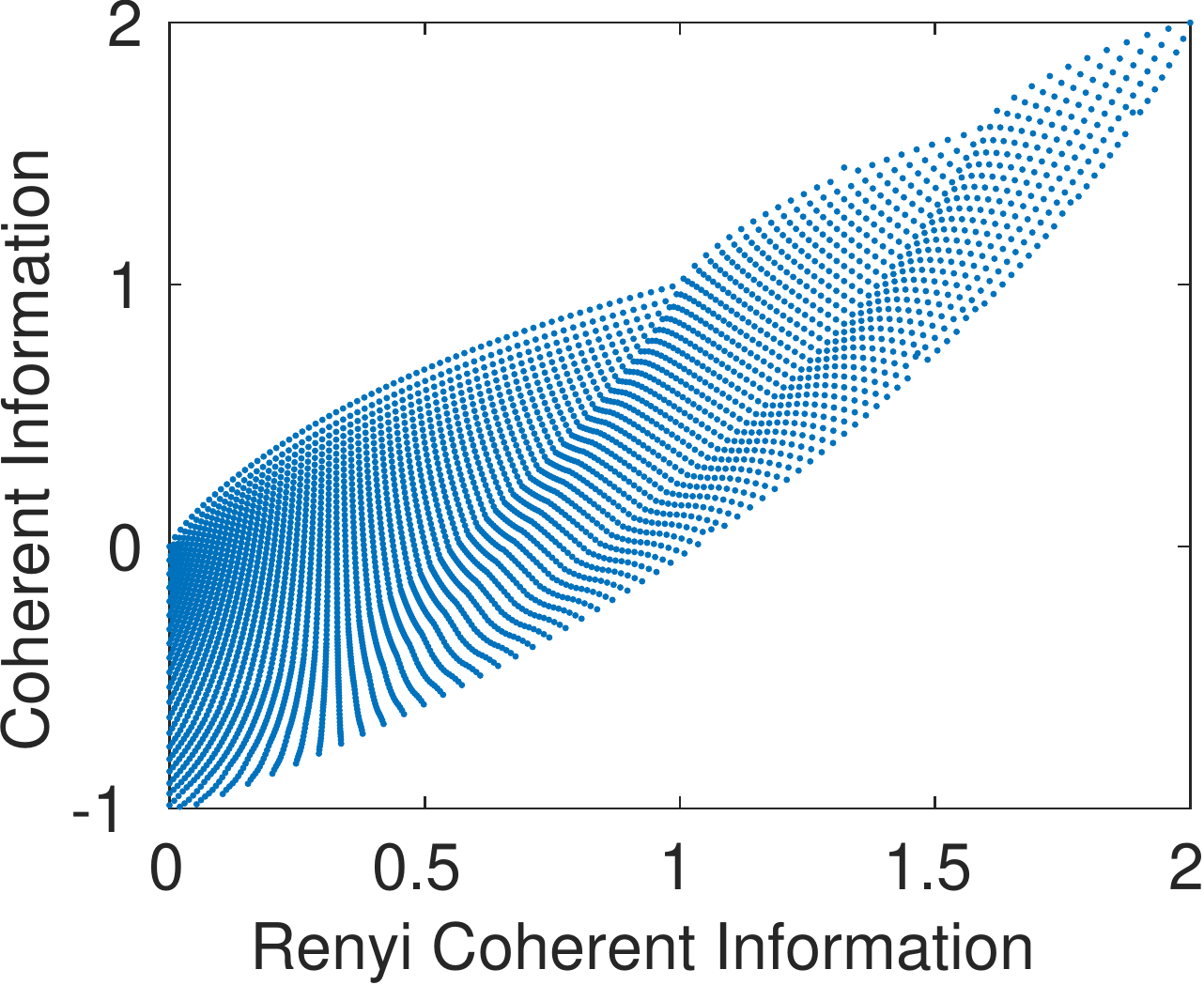}
}
\subfigure[] {
  \includegraphics[width=1.5 in]{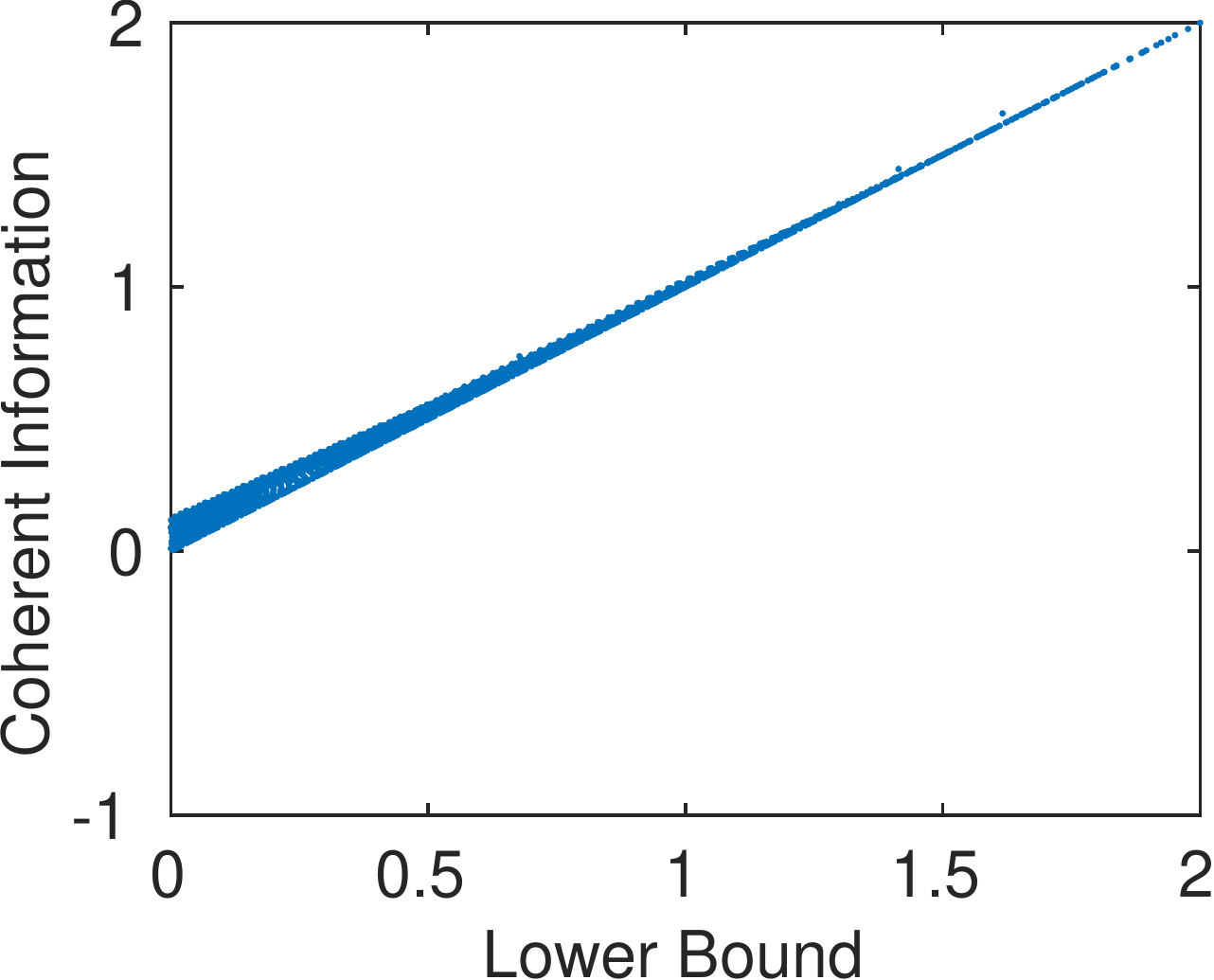}
}
\caption{Comparison for dimension $d=4$. a) Minimum coherent
  information vs. Renyi coherent information. b) Minimum coherent
  information vs. Eq.(\ref{thebound}). Each point displayed
  corresponds to values of the axes related to some pair of
  purity values $\gamma_B,\gamma_{AB}$.  The reason there can be more
  than one value of the coherent information for each $x$ value is
  that there are different combinations of global and local purities
  that lead to the same value of the $x$-axis but different coherent
  information.  We interpret part a) as showing that there are states
  that are definitely entangled because the Renyi coherent information
  is positive but which have little or even negative coherent
  information.  Part b) shows that if you have measured global and
  local purities, then our lower bound (\ref{thebound}) is fairly
  tight--if you have no other information, then there are states with
  just that little coherent information.
    \label{bipartitefig}}
\end{figure}

By measuring the purity of a bipartite system $\gamma_{AB}$, as well as the purity of
(at least) one of the subsystems, one certifies that the systems $A,B$ are entangled whenever
the global purity is lower than the subsystem purity.  This can be expressed
in terms of the Renyi coherent information 
\begin{equation}
S_R=S_2(\rho_B)-S_2(\rho_{AB}),
\end{equation}
as  $S_R > 0$.  Here $\rho_{AB}$ is the density matrix of the entire system and $\rho_B$ is
the reduced density matrix on system $B$, while
\begin{equation}
S_2(\rho)= -\log \mathrm{Tr} \rho^2 = -\log \gamma\ .
\end{equation}
Positivity of $S_R$ implies that the system is entangled \cite{Horodecki09}.
Unfortunately, this Renyi quantity, while able to certify entanglement
(it is an \emph{entanglement witness} \cite{terhal}), does not
quantify it.

While Renyi entropies often play a similar role, the most useful
entanglement quantities are usually given in terms of von Neumann
entropy $S(\rho)=-\mathrm{Tr} \rho \log \rho$.  Then, a more operational
quantity is the \emph{coherent information}, which is defined like
$S_R$, except in terms of von Neumann entropy instead of Renyi:
\begin{equation}
I(A\rangle B)=S(\rho_B)-S(\rho_{AB})\ .  
\end{equation}
Coherent information characterizes the degree to which
error-correction can maintain coherence in the system.
As can be seen in Fig. \ref{bipartitefig}a, there are many states with
high coherent information but very low Renyi coherent information, and
vice versa.

The coherent information is harder to measure than the Renyi version,
because to calculate von Neumann entropies one needs to know the
eigenvalues of a system, not just its purity.  However, one can obtain
quantitative upper and lower bounds on the von Neumann entropy in
terms of the global and marginal purities using the method of Lagrange
multipliers (see the Appendix for details).

\begin{figure}
\includegraphics[width=2.5 in]{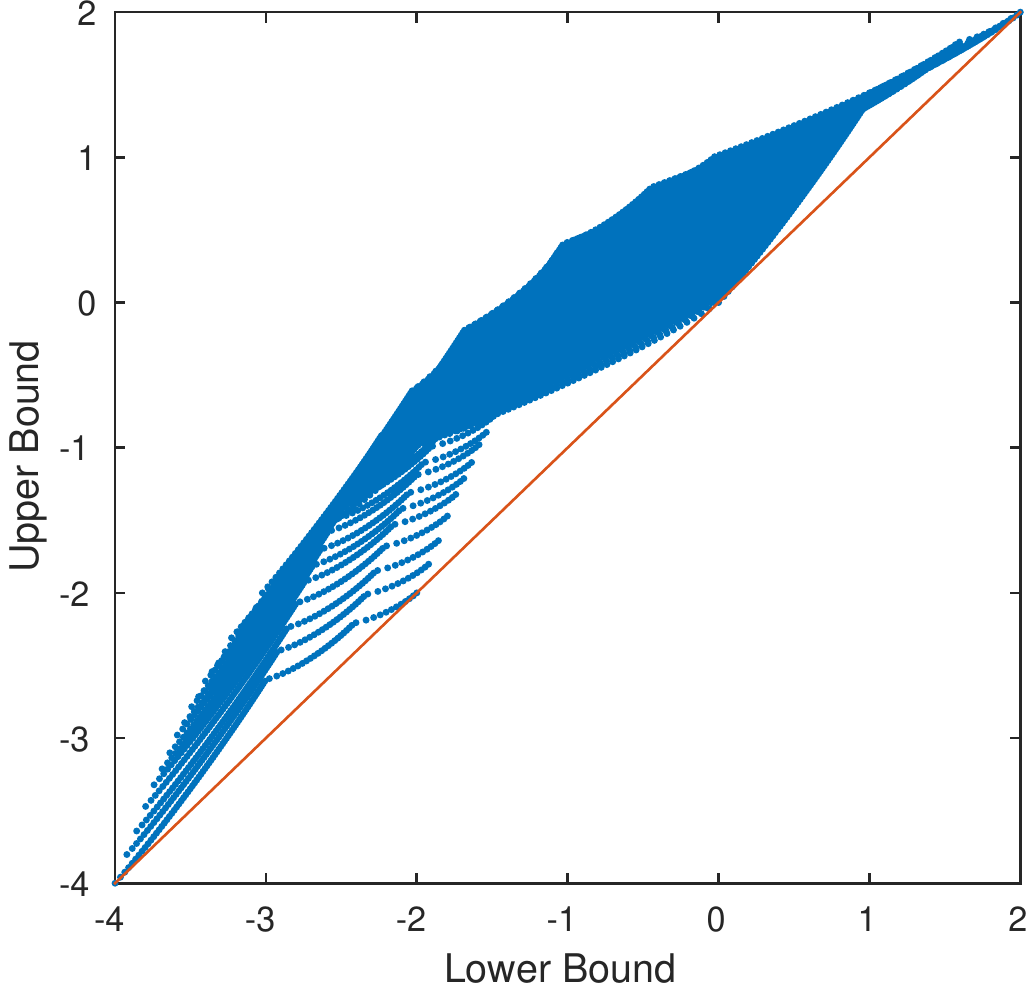}
\caption{Our upper and lower bounds Eq.~(\ref{thebound}) on the coherent
information plotted against each other for dimension $4\times 4$.  Each
point displayed corresponds to the bounds at some pair of purity values
$\gamma_B,\gamma_{AB}$. Again, the
reason there can be more than one value of one bound corresponding
to the other is that there are different combinations of global and
local purities for each value of either bound.
\label{uppervslower}
 }
\end{figure}

\begin{figure}
\includegraphics[width=2.5 in]{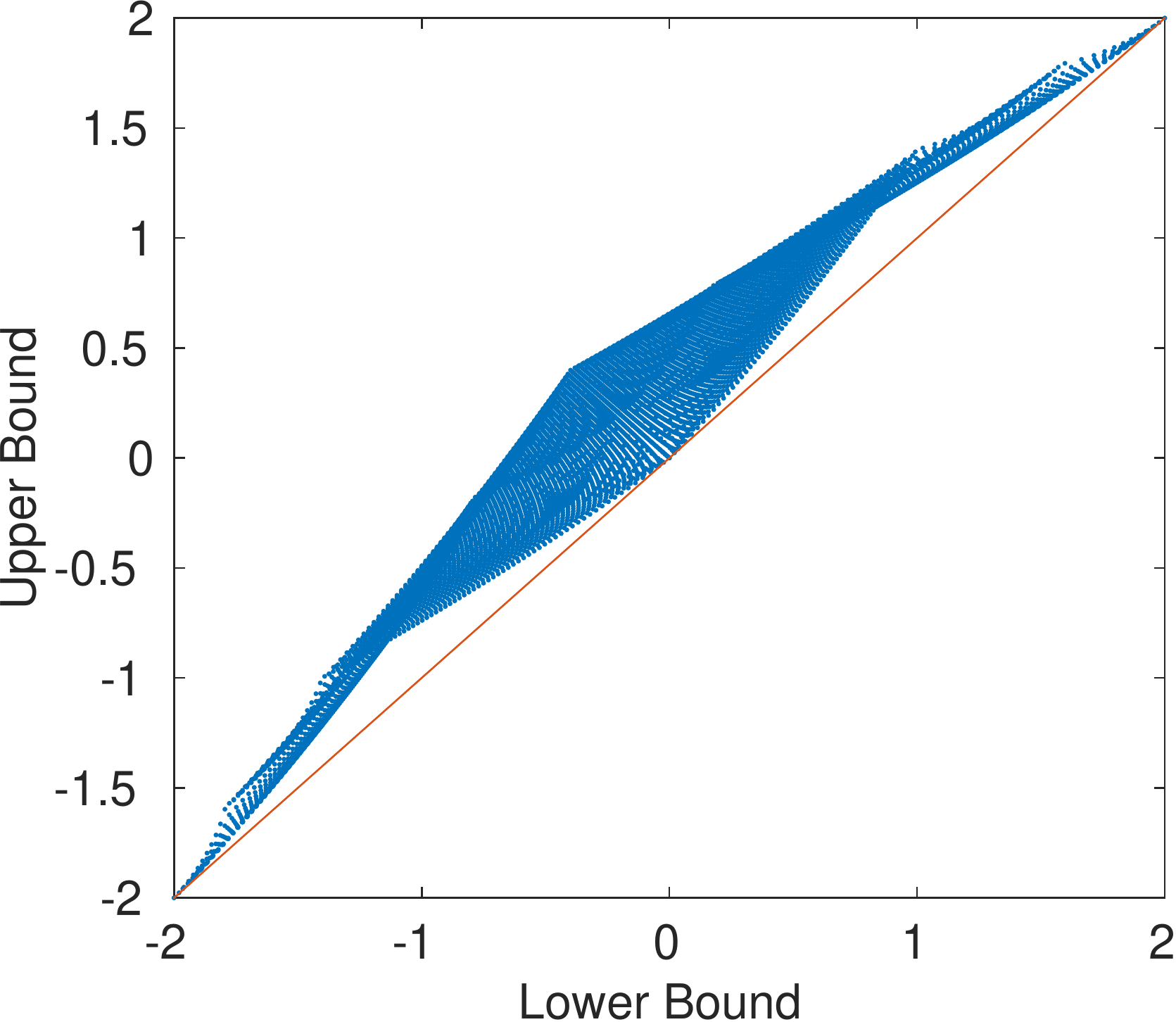}
\caption{Our upper and lower bounds Eq.~(\ref{Eq:Estimationlb}) on the coherence
plotted against each other for dimension $d=4$.  Each
point displayed corresponds the bounds at some pair of purity values
$\gamma_\rho,\gamma_d$. Compare to Fig. \protect\ref{uppervslower}.
\label{uppervslowercoh}
 }
\end{figure}

Given the spectral decomposition $\rho = \sum_{i=1}^dx_{i,\rho}\ket{\psi_i}\bra{\psi_i}, \sum_i x_{i,\rho}=1, \langle\psi_i|\psi_j\rangle=\delta_{ij},  x_{1,\rho}\ge x_{2,\rho} \ge \dots x_{d,\rho}$,  we determine the extreme values of the state entropy $S(\rho)= -\sum_{i=1}^dx_{i,\rho}\log x_{i,\rho}$ at fixed purity $ \gamma_\rho:=\sum_{i=1}^dx_{i,\rho}^2$, where the logarithm is written in base 2 \cite{zyczkowski2003renyi}. 
 The spectrum $\{x_{i,\rho}^{\text{M}}\}$ that maximizes $S(\rho)$ is the following:
\begin{equation}\label{Eq:solutionmaxd}
\begin{aligned}
  x_{1,\rho}^{\text{M}} &= \frac{1}{d}+\sqrt{\frac{d-1}{d}\left(\gamma_\rho-\frac{1}{d}\right)},\\
  x_{2,\rho}^{\text{M}} &=x_{3,\rho}^{\text{M}}=\dots=x_{d,\rho}^{\text{M}} = \frac{1-x_{1,\rho}^{\text{M}}}{d-1}.
\end{aligned}
\end{equation}
The spectrum $\{x_{i,\rho}^{\text{m}}\}$ that minimizes $S(\rho)$ is given by
\begin{equation}\label{Eq:solutionmind}
\begin{aligned}
x_{1,\rho}^{\text{m}}=x^{\text{m}}_{2,\rho}=\dots=x^{\text{m}}_{k_\rho-1,\rho}&=\frac{1-\alpha_\rho}{k_\rho-1},\\
x^{\text{m}}_{k_\rho,\rho} &= \alpha_\rho,\\
x^{\text{m}}_{k_\rho+1,\rho}=\dots=x^{\text{m}}_{d,\rho} &= 0,
\end{aligned}
\end{equation}
where $\alpha_\rho={1}/{k_\rho} - \sqrt{ (1-1/k_\rho)(\gamma_\rho-1/k_\rho)}$ and $k_\rho$ is the integer such that $ \frac{1}{k_\rho} \le  \gamma_\rho \le \frac{1}{k_\rho-1}$.
We can immediately use these results to bound the coherent information as follows:

\emph{Result 1.}--- Given a quantum state $\rho_{AB}\in \mathcal{H}_{d_A}\otimes \mathcal{H}_{d_B}$, its coherent information $I(A\rangle B)$ is bounded as follows:
  \begin{eqnarray}  \label{thebound}
    l_e(\rho_{AB})\leq I(A\rangle B)\leq u_e(\rho_{AB})
 \end{eqnarray}
{\footnotesize
 \begin{eqnarray}
 &&l_e(\rho_{AB})= \nonumber\\
&-&(1-x^{\text{m}}_{k_{\rho_B},\rho_B}) \log x^{\text{m}}_{1,\rho_B}
 -x^{\text{m}}_{k_{\rho_B},\rho_B} \log  x^{\text{m}}_{k_{\rho_B},\rho_B}\nonumber\\
  &+& \frac{(d-1)(1-x^{\text{M}}_{1,\rho_{AB}})}{d} \log \frac{(1-x^{\text{M}}_{1,\rho_{AB}})}{d} + x^{\text{M}}_{1,\rho_{AB}} \log {x^{\text{M}}_{1,\rho_{AB}}},\nonumber\\
&&u_e(\rho_{AB})=\nonumber\\
&& (1-x^{\text{m}}_{k_{\rho_{AB}},\rho_{AB}}) \log x^{\text{m}}_{1,\rho_{AB}} +x^{\text{m}}_{k_{\rho_{AB}},\rho_{AB}} \log  x^{\text{m}}_{k_{\rho_{AB}},\rho_{AB}}\nonumber\\
 &-&  \frac{(d-1)(1-x^{\text{M}}_{1,\rho_B})}{d} \log \frac{(1-x^{\text{M}}_{1,\rho_B})}{d} - x^{\text{M}}_{1,\rho_B} \log x^{\text{M}}_{1,\rho_B}\nonumber
\end{eqnarray}
}
where $\rho_B=\mathrm{Tr}_A (\rho_{AB})$.
These bounds are very good, see Figure \ref{bipartitefig}b and Figure \ref{uppervslower}.

We pause here to note that coherent information is not a full-fledged
entanglement measure \cite{horodecki2000limits}, since it can be zero
or even negative (even for states that \emph{do have}
entanglement). Yet, it characterizes many uses of bipartite
entanglement in quantum computation and communication protocols. The
coherent information measures the capacity for noiseless quantum
communication of a quantum channel between $A$ and $B$, when maximized
over the sets of possible inputs, in the asymptotic limit of an
infinite number of trials \cite{schum,Lloyd,Devetak,Shor}.  It also
quantifies the one-way distillable entanglement
\cite{BDSW,DevetakWinter}. In the quantum state merging protocol
\cite{horodecki2005partial}, it yields the amount of entanglement
which $A$ and $B$ still have after completing the transfer of a
quantum system.

We next study another feature of quantum systems, their \emph{coherence}.
In a way similar to how non-factorizable superpositions of multipartite states, e.g. $\sum_i c_i \ket{ii\ldots i }$, yield entanglement, the quantumness of a system can be identified with the degree of  coherence of its state $\ket{\psi}=\sum_i c_i \ket{i}, \sum_i |c_i|^2=1,$ in a given basis $\{\ket{i}\}$. 
Surprisingly, an information-theoretic characterization of  coherence has been developed only in recent works \cite{cohrev}, in contrast to entanglement which has been extensively investigated \cite{Horodecki09}.

A natural way to quantify the coherence of a state in a  reference basis $\{\ket{1},\ket{2},\dots,\ket{d}\}$ of  a $d$-dimensional Hilbert space $\mathcal{H}_d$
 is by measuring how far it is to the set of incoherent states ${\cal I}$ \cite{BCP,herbut}. The choice of distance function is in principle arbitrary. Yet, an important operational interpretation is enjoyed by the relative entropy of coherence
\begin{equation}\label{CREEQ}
  C_{\mathrm{RE}}(\rho) = \min\limits_{\sigma\in{\cal I}}S(\rho||\sigma)=S(\rho_{d}) - S(\rho),
\end{equation}
where $\rho_{d}=\sum_i \ket{i}\!\!\bra{i}\rho \ket{i}\!\!\bra{i}$ is the state after dephasing in the reference basis.  In other words, coherence is evaluated by how much mixedness a dephasing channel adds to the system state.  The relative entropy of coherence is the distillable coherence of a state \cite{yang}. That is, in the asymptotic limit of infinite system preparations, the maximal rate of extraction of maximally coherent qubit states $1/2\sum_{i,j=0,1}\ket{i}\bra{j}$ by incoherent operations.  This quantity is again easily bounded by purity measurements.

\emph{Result 2} --- The relative entropy of coherence $C_{\mathrm{RE}}(\rho)$ is  bounded as follows:
\begin{eqnarray}\label{Eq:Estimationlb}
  l_c(\rho)\leq C_{\mathrm{RE}}(\rho)\leq u_c(\rho),
\end{eqnarray}
{\footnotesize
\begin{eqnarray}
 &&l_c(\rho)= -(1-x^{\text{m}}_{k_{\rho_d},\rho_d}) \log x^{\text{m}}_{1,\rho_d}-x^{\text{m}}_{k_{\rho_d},\rho_d} \log x^{\text{m}}_{k_{\rho_d},\rho_d}\nonumber\\
  &+& \frac{(d-1)(1-x^{\text{M}}_{1,\rho})}{d} \log\frac{(1-x^{\text{M}}_{1,\rho})}{d} + x^{\text{M}}_{1,\rho} \log{x^{\text{M}}_{1,\rho}},\nonumber\\
 &&u_c(\rho)=  (1-x^{\text{m}}_{k_{\rho},\rho}) \log x^{\text{m}}_{1,\rho} +x^{\text{m}}_{k_{\rho},\rho} \log x^{\text{m}}_{k_{\rho},\rho}\nonumber\\
 &-&  \frac{(d-1)(1-x^{\text{M}}_{1,\rho_d})}{d} \log \frac{(1-x^{\text{M}}_{1,\rho_d})}{d} - x^{\text{M}}_{1,\rho_d} \log x^{\text{M}}_{1,\rho_d}\nonumber.
\end{eqnarray}
}

To summarize, we provided quantitative bounds to coherent information
and relative entropy of coherence in terms of global and marginal
purities.  We now describe the experimental setting required for
measuring state purity.
The purity of a state $\rho$ can be measured on just two copies, $\rho\otimes\rho$---using
precisely the same data as used in \cite{Experiment}. 

This can be done in two ways.
The first method, illustrated in Fig.~\ref{figureexperiment}a is to measure the expectation
value of the swap operator $V$ on $\rho\otimes\rho$, taking advantage
of the identity $\mathrm{Tr}(\rho^2) = \mathrm{Tr}(V
\rho\otimes\rho)$.  This can be accomplished using an ancillary qubit
and a controlled swap~\cite{brun,ekert,filip}.
For multiple qubit systems, implementing a full controlled swap appears
difficult.  However, observing that the swap is factorizable, one can
perform controlled swaps sequentially on the individual corresponding
pairs of qubits from each copy of $\rho$.
We note that the purity of the dephased state is also measurable by applying dephasing before the interaction gate to just one copy of the state, as $\text{Tr}(\rho_d^2)=\text{Tr}(\rho_d\rho)=\text{Tr}(V(\rho_d\otimes\rho))$. 

The measurement can also be accomplished without
ancilla by measuring in the Bell basis.  This is because
\begin{equation}\nonumber
  \mathrm{Tr}( V \rho \otimes \rho) = \mathrm{Tr}\left(({\cal I}\!-\! 2 |\Psi^-\rangle\!\langle \Psi^-|)\rho\otimes\rho\right)
  = 1 - 2 \langle \Psi^-| \rho \otimes \rho | \Psi^-\rangle
\end{equation}
where $|\Psi^+\rangle=\frac{1}{\sqrt{2}}(|01\rangle-|10\rangle)$ is the antisymmetric singlet state.
This second method, ideal for bosonic states, can be achieved by interfering two
copies of a state on a
beamsplitter~\cite{alves2004multipartite,Experiment,Girolami14}.  When a photon is detected
at both output ports of the beamsplitter, the state is projected into the singlet. From
repeated experiments the probability that the output state is the singlet can be determined.
Again, if the state $\rho$ is of many qubits, the beamsplitter can be performed on
individual qubits.  Here there is the drawback that the probability of measuring the output as
all singlets goes down exponentially in the number of qubits in the state, so many measurements
will be needed to evaluate this probability.
This scheme is shown in Fig.~\ref{figureexperiment}b.

Our results Eqs.~(\ref{thebound}) and (\ref{Eq:Estimationlb}) rely on bounding the
von Neumann entropy by quadratic polynomials, i.e. purity. This
represents the leading order term of the von Neumann entropy Taylor
expansion about pure states.  We anticipate that tightened bounds can
be extracted by evaluating the higher order terms
$\mathrm{Tr}(\rho^3), \mathrm{Tr}(\rho^4), \ldots,
\mathrm{Tr}(\rho^{d})$.  Each $k$th-degree polynomial
$\mathrm{Tr}(\rho^k)$ can be estimated by upgrading the scheme in
Fig.~\ref{figureexperiment} to interfere $k$ copies of the state, and
evaluating the shift (generalized swap) operator
$V_k(\phi_1\otimes\phi_2\otimes\ldots\otimes
\phi_k)=\phi_k\otimes\phi_1\otimes\ldots\otimes \phi_{k-1},
\text{Tr}(V_k(\bigotimes_{i=1}^k\rho_i ))=
\text{Tr}(\Pi_{i=1}^k\rho_i), \forall \rho_1, \rho_2, \ldots,
\rho_k$. The protocol would still exponentially outperform full state
reconstruction.

\begin{figure}[t]\centering
  \subfigure[]{\resizebox{1.5 in}{!}{\includegraphics{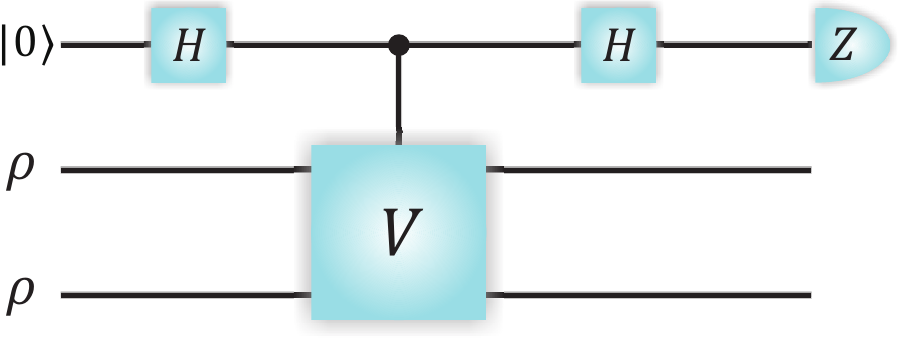}\label{1a}}}
  \subfigure[]{\resizebox{1.5 in}{!}{\includegraphics{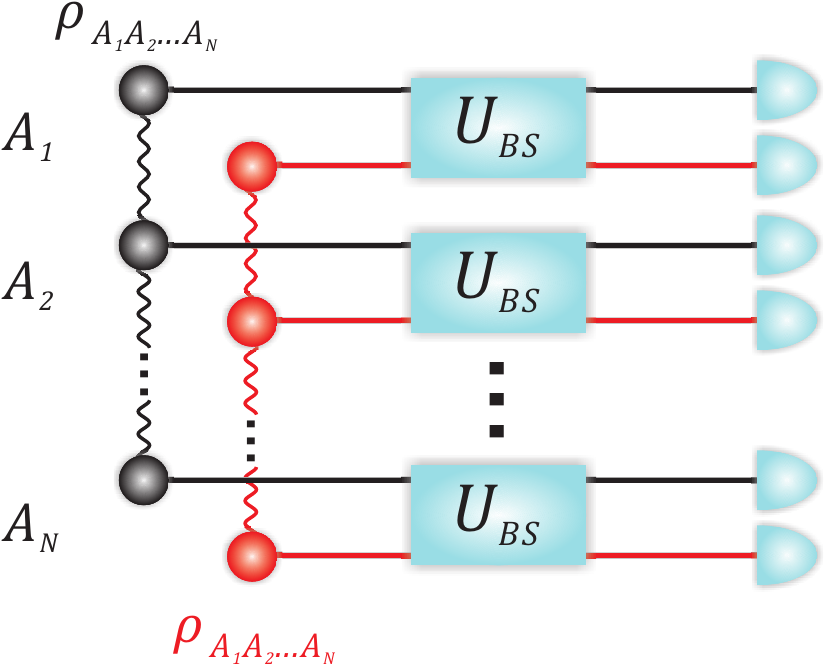}\label{1b}}}
\caption{Purity measurement scheme. a) An ancillary qubit is
  prepared in the state $|0\rangle$ together with two
  identical system copies in the state $\rho$. An interferometric
  network is built up by applying an Hadamard gate $H$, followed by a
  controlled-swap interaction correlating the ancilla with the system
  copies, and a second Hadamard gate. The state purity is encoded in
  the state of the ancilla at the output.
  b) Two identical copies of a state of $n$ qubits $\rho_{A_1A_2\ldots A_n}$ are
  interfered on $n$ beamsplitters. Coincidence observed on all detectors indicates
  projection into the singlet state.
  \label{figureexperiment}}
\end{figure}


One can extend the method proposed here to determine directly measurable bounds to the total correlations in multipartite systems $\{A_i\}$. Let us consider a geometric measure of correlations, the multi-information given by the relative entropy between the state under study and the closest product state,
\begin{equation}{\cal I}(\rho_{A_1,\ldots,A_n}) =\min\limits_{\bigotimes_i\sigma_{A_i}}S(\rho_{A_1,\ldots,A_n}||\bigotimes_i\sigma_{A_i}).
\end{equation}
The quantity is the quantum analogue of the classical multi-information between random variables \cite{han}. It is easy to verify that the product of the state marginals $\bigotimes_i\rho_{A_i}$ solves the minimization, ${\cal I}(\rho_{A_1,\ldots,A_n})=\sum_i S(\rho_{A_i})-S(\rho_{A_1,\ldots,A_n})$ \cite{modi}. Thus, quantitative bounds to the total system correlations in terms of purities are given by a straightforward generalization of Eq.~(\ref{thebound}).

In this letter, we have provided a strategy to evaluate coherence and
entanglement with limited laboratory resources. We have derived
bounds to the relative entropy of coherence and the coherent
information, which can be experimentally extracted by purity
measurements.  Although controlled swaps of large-dimensional systems
are hard to implement, in the case where the systems factorize into
qubits we can do the controlled swaps piece by piece~\cite{alves2004multipartite,jeong}.  We
verified the accuracy of our method by evaluating how tight our approximations are to
the actual entanglement/purity measures are.

The scheme is readily implementable in standard
quantum information testbeds, as optical lattices, ion traps and NMR
(Nuclear Magnetic Resonance) systems.  The scalability of the
measurement network makes purity detection employable in testing the
successful preparation of quantum superpositions in large
computational registers, certifying that a complex device has run a
truly quantum computation. The proposal could simplify the study of
key properties and structure of many-body complex systems, e.g. by
investigating phase transition of condensed matter through coherence
and entanglement detection.


\emph{Acknowledgments}---
We acknowledge T.~Peng for the insightful discussions. This work was supported by National Natural Science Foundation
of China, the 1000 Youth Fellowship program in China, the EPSRC (UK) through the grant EP/L01405X/1, the Wolfson College,
University of Oxford, the ARO under contract W911NF-14-1-012, and the NSF under Grant Number 11258444.

\bibliography{bibEntQuant}

\begin{thebibliography}{39}%
\makeatletter
\providecommand \@ifxundefined [1]{%
 \@ifx{#1\undefined}
}%
\providecommand \@ifnum [1]{%
 \ifnum #1\expandafter \@firstoftwo
 \else \expandafter \@secondoftwo
 \fi
}%
\providecommand \@ifx [1]{%
 \ifx #1\expandafter \@firstoftwo
 \else \expandafter \@secondoftwo
 \fi
}%
\providecommand \natexlab [1]{#1}%
\providecommand \enquote  [1]{``#1''}%
\providecommand \bibnamefont  [1]{#1}%
\providecommand \bibfnamefont [1]{#1}%
\providecommand \citenamefont [1]{#1}%
\providecommand \href@noop [0]{\@secondoftwo}%
\providecommand \href [0]{\begingroup \@sanitize@url \@href}%
\providecommand \@href[1]{\@@startlink{#1}\@@href}%
\providecommand \@@href[1]{\endgroup#1\@@endlink}%
\providecommand \@sanitize@url [0]{\catcode `\\12\catcode `\$12\catcode
  `\&12\catcode `\#12\catcode `\^12\catcode `\_12\catcode `\%12\relax}%
\providecommand \@@startlink[1]{}%
\providecommand \@@endlink[0]{}%
\providecommand \url  [0]{\begingroup\@sanitize@url \@url }%
\providecommand \@url [1]{\endgroup\@href {#1}{\urlprefix }}%
\providecommand \urlprefix  [0]{URL }%
\providecommand \Eprint [0]{\href }%
\providecommand \doibase [0]{http://dx.doi.org/}%
\providecommand \selectlanguage [0]{\@gobble}%
\providecommand \bibinfo  [0]{\@secondoftwo}%
\providecommand \bibfield  [0]{\@secondoftwo}%
\providecommand \translation [1]{[#1]}%
\providecommand \BibitemOpen [0]{}%
\providecommand \bibitemStop [0]{}%
\providecommand \bibitemNoStop [0]{.\EOS\space}%
\providecommand \EOS [0]{\spacefactor3000\relax}%
\providecommand \BibitemShut  [1]{\csname bibitem#1\endcsname}%
\let\auto@bib@innerbib\@empty
\bibitem [{\citenamefont {Horodecki}\ \emph {et~al.}(2009)\citenamefont
  {Horodecki}, \citenamefont {Horodecki}, \citenamefont {Horodecki},\ and\
  \citenamefont {Horodecki}}]{Horodecki09}%
  \BibitemOpen
  \bibfield  {author} {\bibinfo {author} {\bibfnamefont {R.}~\bibnamefont
  {Horodecki}}, \bibinfo {author} {\bibfnamefont {P.}~\bibnamefont
  {Horodecki}}, \bibinfo {author} {\bibfnamefont {M.}~\bibnamefont
  {Horodecki}}, \ and\ \bibinfo {author} {\bibfnamefont {K.}~\bibnamefont
  {Horodecki}},\ }\href {\doibase 10.1103/RevModPhys.81.865} {\bibfield
  {journal} {\bibinfo  {journal} {Rev. Mod. Phys.}\ }\textbf {\bibinfo {volume}
  {81}},\ \bibinfo {pages} {865} (\bibinfo {year} {2009})}\BibitemShut
  {NoStop}%
\bibitem [{\citenamefont {Bennett}\ and\ \citenamefont
  {Wiesner}(1992)}]{dense}%
  \BibitemOpen
  \bibfield  {author} {\bibinfo {author} {\bibfnamefont {C.~H.}\ \bibnamefont
  {Bennett}}\ and\ \bibinfo {author} {\bibfnamefont {S.~J.}\ \bibnamefont
  {Wiesner}},\ }\href@noop {} {\bibfield  {journal} {\bibinfo  {journal}
  {Physical review letters}\ }\textbf {\bibinfo {volume} {69}},\ \bibinfo
  {pages} {2881} (\bibinfo {year} {1992})}\BibitemShut {NoStop}%
\bibitem [{\citenamefont {Shor}(1997)}]{Shor97}%
  \BibitemOpen
  \bibfield  {author} {\bibinfo {author} {\bibfnamefont {P.~W.}\ \bibnamefont
  {Shor}},\ }\href {\doibase 10.1137/S0097539795293172} {\bibfield  {journal}
  {\bibinfo  {journal} {SIAM J. Comput.}\ }\textbf {\bibinfo {volume} {26}},\
  \bibinfo {pages} {1484} (\bibinfo {year} {1997})}\BibitemShut {NoStop}%
\bibitem [{\citenamefont {Ekert}(1991)}]{Ekert91}%
  \BibitemOpen
  \bibfield  {author} {\bibinfo {author} {\bibfnamefont {A.~K.}\ \bibnamefont
  {Ekert}},\ }\href {\doibase 10.1103/PhysRevLett.67.661} {\bibfield  {journal}
  {\bibinfo  {journal} {Phys. Rev. Lett.}\ }\textbf {\bibinfo {volume} {67}},\
  \bibinfo {pages} {661} (\bibinfo {year} {1991})}\BibitemShut {NoStop}%
\bibitem [{\citenamefont {Bennett}\ \emph {et~al.}(1993)\citenamefont
  {Bennett}, \citenamefont {Brassard}, \citenamefont {Cr{\'e}peau},
  \citenamefont {Jozsa}, \citenamefont {Peres},\ and\ \citenamefont
  {Wootters}}]{telep}%
  \BibitemOpen
  \bibfield  {author} {\bibinfo {author} {\bibfnamefont {C.~H.}\ \bibnamefont
  {Bennett}}, \bibinfo {author} {\bibfnamefont {G.}~\bibnamefont {Brassard}},
  \bibinfo {author} {\bibfnamefont {C.}~\bibnamefont {Cr{\'e}peau}}, \bibinfo
  {author} {\bibfnamefont {R.}~\bibnamefont {Jozsa}}, \bibinfo {author}
  {\bibfnamefont {A.}~\bibnamefont {Peres}}, \ and\ \bibinfo {author}
  {\bibfnamefont {W.~K.}\ \bibnamefont {Wootters}},\ }\href@noop {} {\bibfield
  {journal} {\bibinfo  {journal} {Physical review letters}\ }\textbf {\bibinfo
  {volume} {70}},\ \bibinfo {pages} {1895} (\bibinfo {year}
  {1993})}\BibitemShut {NoStop}%
\bibitem [{\citenamefont {Amico}\ \emph {et~al.}(2008)\citenamefont {Amico},
  \citenamefont {Fazio}, \citenamefont {Osterloh},\ and\ \citenamefont
  {Vedral}}]{vedral}%
  \BibitemOpen
  \bibfield  {author} {\bibinfo {author} {\bibfnamefont {L.}~\bibnamefont
  {Amico}}, \bibinfo {author} {\bibfnamefont {R.}~\bibnamefont {Fazio}},
  \bibinfo {author} {\bibfnamefont {A.}~\bibnamefont {Osterloh}}, \ and\
  \bibinfo {author} {\bibfnamefont {V.}~\bibnamefont {Vedral}},\ }\href@noop {}
  {\bibfield  {journal} {\bibinfo  {journal} {Reviews of Modern Physics}\
  }\textbf {\bibinfo {volume} {80}},\ \bibinfo {pages} {517} (\bibinfo {year}
  {2008})}\BibitemShut {NoStop}%
\bibitem [{\citenamefont {Giovannetti}\ \emph {et~al.}(2011)\citenamefont
  {Giovannetti}, \citenamefont {Lloyd},\ and\ \citenamefont
  {Maccone}}]{metrorev}%
  \BibitemOpen
  \bibfield  {author} {\bibinfo {author} {\bibfnamefont {V.}~\bibnamefont
  {Giovannetti}}, \bibinfo {author} {\bibfnamefont {S.}~\bibnamefont {Lloyd}},
  \ and\ \bibinfo {author} {\bibfnamefont {L.}~\bibnamefont {Maccone}},\
  }\href@noop {} {\bibfield  {journal} {\bibinfo  {journal} {Nature photonics}\
  }\textbf {\bibinfo {volume} {5}},\ \bibinfo {pages} {222} (\bibinfo {year}
  {2011})}\BibitemShut {NoStop}%
\bibitem [{Note1()}]{Note1}%
  \BibitemOpen
  \bibinfo {note} {The von Neumann entropy of the reduced density matrix of
  either subsystem is an entanglement measure for bipartite pure
  states}\BibitemShut {NoStop}%
\bibitem [{\citenamefont {Plenio}\ and\ \citenamefont
  {Virmani}(2007)}]{Measures}%
  \BibitemOpen
  \bibfield  {author} {\bibinfo {author} {\bibfnamefont {M.~B.}\ \bibnamefont
  {Plenio}}\ and\ \bibinfo {author} {\bibfnamefont {S.}~\bibnamefont
  {Virmani}},\ }\href {http://dl.acm.org/citation.cfm?id=2011706.2011707}
  {\bibfield  {journal} {\bibinfo  {journal} {Quantum Info. Comput.}\ }\textbf
  {\bibinfo {volume} {7}},\ \bibinfo {pages} {1} (\bibinfo {year}
  {2007})}\BibitemShut {NoStop}%
\bibitem [{\citenamefont {Bennett}\ \emph {et~al.}(1996)\citenamefont
  {Bennett}, \citenamefont {DiVincenzo}, \citenamefont {Smolin},\ and\
  \citenamefont {Wootters}}]{BDSW}%
  \BibitemOpen
  \bibfield  {author} {\bibinfo {author} {\bibfnamefont {C.~H.}\ \bibnamefont
  {Bennett}}, \bibinfo {author} {\bibfnamefont {D.~P.}\ \bibnamefont
  {DiVincenzo}}, \bibinfo {author} {\bibfnamefont {J.~A.}\ \bibnamefont
  {Smolin}}, \ and\ \bibinfo {author} {\bibfnamefont {W.~K.}\ \bibnamefont
  {Wootters}},\ }\href {\doibase 10.1103/PhysRevA.54.3824} {\bibfield
  {journal} {\bibinfo  {journal} {Phys. Rev. A}\ }\textbf {\bibinfo {volume}
  {54}},\ \bibinfo {pages} {3824} (\bibinfo {year} {1996})}\BibitemShut
  {NoStop}%
\bibitem [{\citenamefont {Hayden}\ \emph {et~al.}(2001)\citenamefont {Hayden},
  \citenamefont {Horodecki},\ and\ \citenamefont
  {Terhal}}]{hayden2001asymptotic}%
  \BibitemOpen
  \bibfield  {author} {\bibinfo {author} {\bibfnamefont {P.~M.}\ \bibnamefont
  {Hayden}}, \bibinfo {author} {\bibfnamefont {M.}~\bibnamefont {Horodecki}}, \
  and\ \bibinfo {author} {\bibfnamefont {B.~M.}\ \bibnamefont {Terhal}},\
  }\href@noop {} {\bibfield  {journal} {\bibinfo  {journal} {Journal of Physics
  A: Mathematical and General}\ }\textbf {\bibinfo {volume} {34}},\ \bibinfo
  {pages} {6891} (\bibinfo {year} {2001})}\BibitemShut {NoStop}%
\bibitem [{\citenamefont {Wootters}(1998)}]{wootters}%
  \BibitemOpen
  \bibfield  {author} {\bibinfo {author} {\bibfnamefont {W.~K.}\ \bibnamefont
  {Wootters}},\ }\href {\doibase 10.1103/PhysRevLett.80.2245} {\bibfield
  {journal} {\bibinfo  {journal} {Phys. Rev. Lett.}\ }\textbf {\bibinfo
  {volume} {80}},\ \bibinfo {pages} {2245} (\bibinfo {year}
  {1998})}\BibitemShut {NoStop}%
\bibitem [{\citenamefont {Vidal}\ and\ \citenamefont
  {Werner}(2002)}]{vidal2002computable}%
  \BibitemOpen
  \bibfield  {author} {\bibinfo {author} {\bibfnamefont {G.}~\bibnamefont
  {Vidal}}\ and\ \bibinfo {author} {\bibfnamefont {R.~F.}\ \bibnamefont
  {Werner}},\ }\href@noop {} {\bibfield  {journal} {\bibinfo  {journal}
  {Physical Review A}\ }\textbf {\bibinfo {volume} {65}},\ \bibinfo {pages}
  {032314} (\bibinfo {year} {2002})}\BibitemShut {NoStop}%
\bibitem [{\citenamefont {G{\"u}hne}\ and\ \citenamefont
  {T{\'o}th}(2009)}]{entdet}%
  \BibitemOpen
  \bibfield  {author} {\bibinfo {author} {\bibfnamefont {O.}~\bibnamefont
  {G{\"u}hne}}\ and\ \bibinfo {author} {\bibfnamefont {G.}~\bibnamefont
  {T{\'o}th}},\ }\href@noop {} {\bibfield  {journal} {\bibinfo  {journal}
  {Physics Reports}\ }\textbf {\bibinfo {volume} {474}},\ \bibinfo {pages} {1}
  (\bibinfo {year} {2009})}\BibitemShut {NoStop}%
\bibitem [{\citenamefont {Terhal}(2000)}]{terhal}%
  \BibitemOpen
  \bibfield  {author} {\bibinfo {author} {\bibfnamefont {B.~M.}\ \bibnamefont
  {Terhal}},\ }\href@noop {} {\bibfield  {journal} {\bibinfo  {journal}
  {Physics Letters A}\ }\textbf {\bibinfo {volume} {271}},\ \bibinfo {pages}
  {319} (\bibinfo {year} {2000})}\BibitemShut {NoStop}%
\bibitem [{\citenamefont {Strobel}\ \emph {et~al.}(2014)\citenamefont
  {Strobel}, \citenamefont {Muessel}, \citenamefont {Linnemann}, \citenamefont
  {Zibold}, \citenamefont {Hume}, \citenamefont {Pezz{\`e}}, \citenamefont
  {Smerzi},\ and\ \citenamefont {Oberthaler}}]{oberthaler}%
  \BibitemOpen
  \bibfield  {author} {\bibinfo {author} {\bibfnamefont {H.}~\bibnamefont
  {Strobel}}, \bibinfo {author} {\bibfnamefont {W.}~\bibnamefont {Muessel}},
  \bibinfo {author} {\bibfnamefont {D.}~\bibnamefont {Linnemann}}, \bibinfo
  {author} {\bibfnamefont {T.}~\bibnamefont {Zibold}}, \bibinfo {author}
  {\bibfnamefont {D.~B.}\ \bibnamefont {Hume}}, \bibinfo {author}
  {\bibfnamefont {L.}~\bibnamefont {Pezz{\`e}}}, \bibinfo {author}
  {\bibfnamefont {A.}~\bibnamefont {Smerzi}}, \ and\ \bibinfo {author}
  {\bibfnamefont {M.~K.}\ \bibnamefont {Oberthaler}},\ }\href@noop {}
  {\bibfield  {journal} {\bibinfo  {journal} {Science}\ }\textbf {\bibinfo
  {volume} {345}},\ \bibinfo {pages} {424} (\bibinfo {year}
  {2014})}\BibitemShut {NoStop}%
\bibitem [{\citenamefont {Hauke}\ \emph {et~al.}(2016)\citenamefont {Hauke},
  \citenamefont {Heyl}, \citenamefont {Tagliacozzo},\ and\ \citenamefont
  {Zoller}}]{zoller}%
  \BibitemOpen
  \bibfield  {author} {\bibinfo {author} {\bibfnamefont {P.}~\bibnamefont
  {Hauke}}, \bibinfo {author} {\bibfnamefont {M.}~\bibnamefont {Heyl}},
  \bibinfo {author} {\bibfnamefont {L.}~\bibnamefont {Tagliacozzo}}, \ and\
  \bibinfo {author} {\bibfnamefont {P.}~\bibnamefont {Zoller}},\ }\href@noop {}
  {\bibfield  {journal} {\bibinfo  {journal} {Nature Physics}\ } (\bibinfo
  {year} {2016})}\BibitemShut {NoStop}%
\bibitem [{\citenamefont {Zhang}\ \emph {et~al.}(2008)\citenamefont {Zhang},
  \citenamefont {Gong}, \citenamefont {Zhang},\ and\ \citenamefont
  {Guo}}]{hefei}%
  \BibitemOpen
  \bibfield  {author} {\bibinfo {author} {\bibfnamefont {C.-J.}\ \bibnamefont
  {Zhang}}, \bibinfo {author} {\bibfnamefont {Y.-X.}\ \bibnamefont {Gong}},
  \bibinfo {author} {\bibfnamefont {Y.-S.}\ \bibnamefont {Zhang}}, \ and\
  \bibinfo {author} {\bibfnamefont {G.-C.}\ \bibnamefont {Guo}},\ }\href
  {\doibase 10.1103/PhysRevA.78.042308} {\bibfield  {journal} {\bibinfo
  {journal} {Phys. Rev. A}\ }\textbf {\bibinfo {volume} {78}},\ \bibinfo
  {pages} {042308} (\bibinfo {year} {2008})}\BibitemShut {NoStop}%
\bibitem [{\citenamefont {Alves}\ and\ \citenamefont
  {Jaksch}(2004)}]{alves2004multipartite}%
  \BibitemOpen
  \bibfield  {author} {\bibinfo {author} {\bibfnamefont {C.~M.}\ \bibnamefont
  {Alves}}\ and\ \bibinfo {author} {\bibfnamefont {D.}~\bibnamefont {Jaksch}},\
  }\href@noop {} {\bibfield  {journal} {\bibinfo  {journal} {Physical review
  letters}\ }\textbf {\bibinfo {volume} {93}},\ \bibinfo {pages} {110501}
  (\bibinfo {year} {2004})}\BibitemShut {NoStop}%
\bibitem [{\citenamefont {Islam}\ \emph {et~al.}(2015)\citenamefont {Islam},
  \citenamefont {Ma}, \citenamefont {Preiss}, \citenamefont {Tai},
  \citenamefont {Lukin}, \citenamefont {Rispoli},\ and\ \citenamefont
  {Greiner}}]{Experiment}%
  \BibitemOpen
  \bibfield  {author} {\bibinfo {author} {\bibfnamefont {R.}~\bibnamefont
  {Islam}}, \bibinfo {author} {\bibfnamefont {R.}~\bibnamefont {Ma}}, \bibinfo
  {author} {\bibfnamefont {P.~M.}\ \bibnamefont {Preiss}}, \bibinfo {author}
  {\bibfnamefont {M.~E.}\ \bibnamefont {Tai}}, \bibinfo {author} {\bibfnamefont
  {A.}~\bibnamefont {Lukin}}, \bibinfo {author} {\bibfnamefont
  {M.}~\bibnamefont {Rispoli}}, \ and\ \bibinfo {author} {\bibfnamefont
  {M.}~\bibnamefont {Greiner}},\ }\href {\doibase doi:10.1038/nature15750}
  {\bibfield  {journal} {\bibinfo  {journal} {Nature}\ }\textbf {\bibinfo
  {volume} {528}},\ \bibinfo {pages} {77} (\bibinfo {year} {2015})}\BibitemShut
  {NoStop}%
\bibitem [{\citenamefont {{\.Z}yczkowski}(2003)}]{zyczkowski2003renyi}%
  \BibitemOpen
  \bibfield  {author} {\bibinfo {author} {\bibfnamefont {K.}~\bibnamefont
  {{\.Z}yczkowski}},\ }\href@noop {} {\bibfield  {journal} {\bibinfo  {journal}
  {Open Systems \& Information Dynamics}\ }\textbf {\bibinfo {volume} {10}},\
  \bibinfo {pages} {297} (\bibinfo {year} {2003})}\BibitemShut {NoStop}%
\bibitem [{\citenamefont {Horodecki}\ \emph {et~al.}(2000)\citenamefont
  {Horodecki}, \citenamefont {Horodecki},\ and\ \citenamefont
  {Horodecki}}]{horodecki2000limits}%
  \BibitemOpen
  \bibfield  {author} {\bibinfo {author} {\bibfnamefont {M.}~\bibnamefont
  {Horodecki}}, \bibinfo {author} {\bibfnamefont {P.}~\bibnamefont
  {Horodecki}}, \ and\ \bibinfo {author} {\bibfnamefont {R.}~\bibnamefont
  {Horodecki}},\ }\href@noop {} {\bibfield  {journal} {\bibinfo  {journal}
  {Physical Review Letters}\ }\textbf {\bibinfo {volume} {84}},\ \bibinfo
  {pages} {2014} (\bibinfo {year} {2000})}\BibitemShut {NoStop}%
\bibitem [{\citenamefont {Schumacher}\ and\ \citenamefont
  {Nielsen}(1996)}]{schum}%
  \BibitemOpen
  \bibfield  {author} {\bibinfo {author} {\bibfnamefont {B.}~\bibnamefont
  {Schumacher}}\ and\ \bibinfo {author} {\bibfnamefont {M.~A.}\ \bibnamefont
  {Nielsen}},\ }\href@noop {} {\bibfield  {journal} {\bibinfo  {journal}
  {Physical Review A}\ }\textbf {\bibinfo {volume} {54}},\ \bibinfo {pages}
  {2629} (\bibinfo {year} {1996})}\BibitemShut {NoStop}%
\bibitem [{\citenamefont {Lloyd}(1997)}]{Lloyd}%
  \BibitemOpen
  \bibfield  {author} {\bibinfo {author} {\bibfnamefont {S.}~\bibnamefont
  {Lloyd}},\ }\href {\doibase 10.1103/PhysRevA.55.1613} {\bibfield  {journal}
  {\bibinfo  {journal} {Phys. Rev. A}\ }\textbf {\bibinfo {volume} {55}},\
  \bibinfo {pages} {1613} (\bibinfo {year} {1997})}\BibitemShut {NoStop}%
\bibitem [{\citenamefont {Devetak}(2005)}]{Devetak}%
  \BibitemOpen
  \bibfield  {author} {\bibinfo {author} {\bibfnamefont {I.}~\bibnamefont
  {Devetak}},\ }\href@noop {} {\bibfield  {journal} {\bibinfo  {journal} {IEEE
  Transactions on Information Theory}\ }\textbf {\bibinfo {volume} {51}},\
  \bibinfo {pages} {44} (\bibinfo {year} {2005})}\BibitemShut {NoStop}%
\bibitem [{\citenamefont {Shor}(2002)}]{Shor}%
  \BibitemOpen
  \bibfield  {author} {\bibinfo {author} {\bibfnamefont {P.~W.}\ \bibnamefont
  {Shor}},\ }in\ \href@noop {} {\emph {\bibinfo {booktitle} {lecture notes,
  MSRI Workshop on Quantum Computation}}}\ (\bibinfo {year} {2002})\BibitemShut
  {NoStop}%
\bibitem [{\citenamefont {Devetak}\ and\ \citenamefont
  {Winter}(2005)}]{DevetakWinter}%
  \BibitemOpen
  \bibfield  {author} {\bibinfo {author} {\bibfnamefont {I.}~\bibnamefont
  {Devetak}}\ and\ \bibinfo {author} {\bibfnamefont {A.}~\bibnamefont
  {Winter}},\ }in\ \href@noop {} {\emph {\bibinfo {booktitle} {Proceedings of
  the Royal Society of London A: Mathematical, Physical and Engineering
  Sciences}}},\ Vol.\ \bibinfo {volume} {461}\ (\bibinfo {organization} {The
  Royal Society},\ \bibinfo {year} {2005})\ pp.\ \bibinfo {pages}
  {207--235}\BibitemShut {NoStop}%
\bibitem [{\citenamefont {Horodecki}\ \emph {et~al.}(2005)\citenamefont
  {Horodecki}, \citenamefont {Oppenheim},\ and\ \citenamefont
  {Winter}}]{horodecki2005partial}%
  \BibitemOpen
  \bibfield  {author} {\bibinfo {author} {\bibfnamefont {M.}~\bibnamefont
  {Horodecki}}, \bibinfo {author} {\bibfnamefont {J.}~\bibnamefont
  {Oppenheim}}, \ and\ \bibinfo {author} {\bibfnamefont {A.}~\bibnamefont
  {Winter}},\ }\href@noop {} {\bibfield  {journal} {\bibinfo  {journal}
  {Nature}\ }\textbf {\bibinfo {volume} {436}},\ \bibinfo {pages} {673}
  (\bibinfo {year} {2005})}\BibitemShut {NoStop}%
\bibitem [{\citenamefont {Streltsov}\ \emph {et~al.}(2016)\citenamefont
  {Streltsov}, \citenamefont {Adesso},\ and\ \citenamefont {Plenio}}]{cohrev}%
  \BibitemOpen
  \bibfield  {author} {\bibinfo {author} {\bibfnamefont {A.}~\bibnamefont
  {Streltsov}}, \bibinfo {author} {\bibfnamefont {G.}~\bibnamefont {Adesso}}, \
  and\ \bibinfo {author} {\bibfnamefont {M.~B.}\ \bibnamefont {Plenio}},\
  }\href@noop {} {\bibfield  {journal} {\bibinfo  {journal} {arXiv preprint
  arXiv:1609.02439}\ } (\bibinfo {year} {2016})}\BibitemShut {NoStop}%
\bibitem [{\citenamefont {Baumgratz}\ \emph {et~al.}(2014)\citenamefont
  {Baumgratz}, \citenamefont {Cramer},\ and\ \citenamefont {Plenio}}]{BCP}%
  \BibitemOpen
  \bibfield  {author} {\bibinfo {author} {\bibfnamefont {T.}~\bibnamefont
  {Baumgratz}}, \bibinfo {author} {\bibfnamefont {M.}~\bibnamefont {Cramer}}, \
  and\ \bibinfo {author} {\bibfnamefont {M.}~\bibnamefont {Plenio}},\
  }\href@noop {} {\bibfield  {journal} {\bibinfo  {journal} {Physical review
  letters}\ }\textbf {\bibinfo {volume} {113}},\ \bibinfo {pages} {140401}
  (\bibinfo {year} {2014})}\BibitemShut {NoStop}%
\bibitem [{\citenamefont {Herbut}(2005)}]{herbut}%
  \BibitemOpen
  \bibfield  {author} {\bibinfo {author} {\bibfnamefont {F.}~\bibnamefont
  {Herbut}},\ }\href@noop {} {\bibfield  {journal} {\bibinfo  {journal}
  {Journal of Physics A: Mathematical and General}\ }\textbf {\bibinfo {volume}
  {38}},\ \bibinfo {pages} {2959} (\bibinfo {year} {2005})}\BibitemShut
  {NoStop}%
\bibitem [{\citenamefont {Winter}\ and\ \citenamefont {Yang}(2016)}]{yang}%
  \BibitemOpen
  \bibfield  {author} {\bibinfo {author} {\bibfnamefont {A.}~\bibnamefont
  {Winter}}\ and\ \bibinfo {author} {\bibfnamefont {D.}~\bibnamefont {Yang}},\
  }\href@noop {} {\bibfield  {journal} {\bibinfo  {journal} {Physical review
  letters}\ }\textbf {\bibinfo {volume} {116}},\ \bibinfo {pages} {120404}
  (\bibinfo {year} {2016})}\BibitemShut {NoStop}%
\bibitem [{\citenamefont {Brun}(2004)}]{brun}%
  \BibitemOpen
  \bibfield  {author} {\bibinfo {author} {\bibfnamefont {T.~A.}\ \bibnamefont
  {Brun}},\ }\href@noop {} {\bibfield  {journal} {\bibinfo  {journal} {arXiv
  preprint quant-ph/0401067}\ } (\bibinfo {year} {2004})}\BibitemShut {NoStop}%
\bibitem [{\citenamefont {Ekert}\ \emph {et~al.}(2002)\citenamefont {Ekert},
  \citenamefont {Alves}, \citenamefont {Oi}, \citenamefont {Horodecki},
  \citenamefont {Horodecki},\ and\ \citenamefont {Kwek}}]{ekert}%
  \BibitemOpen
  \bibfield  {author} {\bibinfo {author} {\bibfnamefont {A.~K.}\ \bibnamefont
  {Ekert}}, \bibinfo {author} {\bibfnamefont {C.~M.}\ \bibnamefont {Alves}},
  \bibinfo {author} {\bibfnamefont {D.~K.}\ \bibnamefont {Oi}}, \bibinfo
  {author} {\bibfnamefont {M.}~\bibnamefont {Horodecki}}, \bibinfo {author}
  {\bibfnamefont {P.}~\bibnamefont {Horodecki}}, \ and\ \bibinfo {author}
  {\bibfnamefont {L.~C.}\ \bibnamefont {Kwek}},\ }\href@noop {} {\bibfield
  {journal} {\bibinfo  {journal} {Physical review letters}\ }\textbf {\bibinfo
  {volume} {88}},\ \bibinfo {pages} {217901} (\bibinfo {year}
  {2002})}\BibitemShut {NoStop}%
\bibitem [{\citenamefont {Filip}(2002)}]{filip}%
  \BibitemOpen
  \bibfield  {author} {\bibinfo {author} {\bibfnamefont {R.}~\bibnamefont
  {Filip}},\ }\href@noop {} {\bibfield  {journal} {\bibinfo  {journal}
  {Physical Review A}\ }\textbf {\bibinfo {volume} {65}},\ \bibinfo {pages}
  {062320} (\bibinfo {year} {2002})}\BibitemShut {NoStop}%
\bibitem [{\citenamefont {Girolami}(2014)}]{Girolami14}%
  \BibitemOpen
  \bibfield  {author} {\bibinfo {author} {\bibfnamefont {D.}~\bibnamefont
  {Girolami}},\ }\href {\doibase 10.1103/PhysRevLett.113.170401} {\bibfield
  {journal} {\bibinfo  {journal} {Phys. Rev. Lett.}\ }\textbf {\bibinfo
  {volume} {113}},\ \bibinfo {pages} {170401} (\bibinfo {year}
  {2014})}\BibitemShut {NoStop}%
\bibitem [{\citenamefont {Han}(1978)}]{han}%
  \BibitemOpen
  \bibfield  {author} {\bibinfo {author} {\bibfnamefont {T.~S.}\ \bibnamefont
  {Han}},\ }\href@noop {} {\bibfield  {journal} {\bibinfo  {journal}
  {Information and Control}\ }\textbf {\bibinfo {volume} {36}},\ \bibinfo
  {pages} {133} (\bibinfo {year} {1978})}\BibitemShut {NoStop}%
\bibitem [{\citenamefont {Modi}\ \emph {et~al.}(2010)\citenamefont {Modi},
  \citenamefont {Paterek}, \citenamefont {Son}, \citenamefont {Vedral},\ and\
  \citenamefont {Williamson}}]{modi}%
  \BibitemOpen
  \bibfield  {author} {\bibinfo {author} {\bibfnamefont {K.}~\bibnamefont
  {Modi}}, \bibinfo {author} {\bibfnamefont {T.}~\bibnamefont {Paterek}},
  \bibinfo {author} {\bibfnamefont {W.}~\bibnamefont {Son}}, \bibinfo {author}
  {\bibfnamefont {V.}~\bibnamefont {Vedral}}, \ and\ \bibinfo {author}
  {\bibfnamefont {M.}~\bibnamefont {Williamson}},\ }\href@noop {} {\bibfield
  {journal} {\bibinfo  {journal} {Physical review letters}\ }\textbf {\bibinfo
  {volume} {104}},\ \bibinfo {pages} {080501} (\bibinfo {year}
  {2010})}\BibitemShut {NoStop}%
\bibitem [{\citenamefont {Jeong}\ \emph {et~al.}(2014)\citenamefont {Jeong},
  \citenamefont {Noh}, \citenamefont {Bae}, \citenamefont {Angelakis},\ and\
  \citenamefont {Ralph}}]{jeong}%
  \BibitemOpen
  \bibfield  {author} {\bibinfo {author} {\bibfnamefont {H.}~\bibnamefont
  {Jeong}}, \bibinfo {author} {\bibfnamefont {C.}~\bibnamefont {Noh}}, \bibinfo
  {author} {\bibfnamefont {S.}~\bibnamefont {Bae}}, \bibinfo {author}
  {\bibfnamefont {D.}~\bibnamefont {Angelakis}}, \ and\ \bibinfo {author}
  {\bibfnamefont {T.}~\bibnamefont {Ralph}},\ }\href@noop {} {\bibfield
  {journal} {\bibinfo  {journal} {J. of Opt. Soc. of Am. B}\ }\textbf {\bibinfo
  {volume} {31}},\ \bibinfo {pages} {3057} (\bibinfo {year}
  {2014})}\BibitemShut {NoStop}%
\end{thebibliography}%
\widetext
\clearpage
\appendix

\section*{Appendix: Derivation of the bounds to relative entropy of coherence and coherent information, Eqs. 4,6 of the main text}

Given a quantum state $\rho$ in a $d$-dimensional Hilbert space, our task is to bound the Von Neumann entropy of $\rho$ with a function of the state purity $\gamma(\rho):=\mathrm{Tr}(\rho^2).$
The spectral decomposition of the quantum state is $\rho = \sum_{i=1}^dx_i\ket{\psi_i}\bra{\psi_i}$, where $\{\ket{\psi_i}\}$ forms an orthonormal basis of the $d$-dimensional Hilbert space. The variational problem is then formulated as
\begin{equation}\label{Eq:maxd}
\begin{aligned}
\max/\min S(\rho)&= -\sum_{i=1}^dx_i\log(x_i)\\
s.t.~~ &\sum_{i=1}^dx_i^2 = \gamma\\
&\sum_{i=1}^dx_i=1\\
&0\le x_i \le1, \forall i,
\end{aligned}
\end{equation}
where $\gamma=\mathrm{Tr} \rho^2$ is the purity of $\rho$.

Intuitively, the the vector $x$ that maximize $S$ is the one that spread as uniformly as possible; while the vector $x$ that minimize $S$ is the one that has the minimal number of nonzero large values. In the following, we will analytically solve this problem and confirm this intuition.

\subsection{Maximization}
First, we focus on the maximization problem with $d=3$. Note that when $d=2$, the solution to the constraints of Eq.~\eqref{Eq:maxd} is unique and the optimization problem will be trivial. Without loss of generality, we assume $x_1\ge x_2\ge x_3$. Then the problem can be stated as
\begin{equation}\label{Eq:max3}
\begin{aligned}
\max S(\rho)&= -x_1\log(x_1)-x_2\log(x_2)-x_3\log(x_3)\\
s.t.~~ &x_1^2+x_2^2+x_3^2 = \gamma\\
&x_1+x_2+x_3=1\\
&1\ge x_1\ge x_2\ge x_3\ge0. \\
\end{aligned}
\end{equation}
We prove that the maximum is reached with the following Lemma.
\begin{lemma}\label{lamme:max3}
The solution to the maximization problem in Eq.~\eqref{Eq:max3} is given by
\begin{equation}\label{lamme:max3solution}
\begin{aligned}
  x_1 &= \frac{1}{3}+\sqrt{\frac{2}{3}\left(\gamma-\frac{1}{3}\right)},\nonumber\\
  x_2 & = x_3 = \frac{1-x_1}{2}.
\end{aligned}
\end{equation}
\end{lemma}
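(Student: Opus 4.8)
The plan is to read (\ref{Eq:max3}) as a constrained optimization and pin down the maximizer from its stationarity structure. Since $S$ is symmetric in $x_1,x_2,x_3$, I would first drop the ordering constraints (they only serve to single out a representative) and maximize $S$ over the compact set $\{x\in\mathbb{R}^3:\ x_i\ge 0,\ \sum_i x_i=1,\ \sum_i x_i^2=\gamma\}$, on which a maximizer $x^\ast$ exists by continuity and compactness. The engine of the argument is a Lagrange-multiplier step: if every $x_i^\ast>0$, stationarity gives $-\log_2 x_i^\ast-1/\ln 2=\mu+2\lambda x_i^\ast$ for all $i$, so each $x_i^\ast$ is a root of $h(t):=\log_2 t+2\lambda t-c$. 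Because $h''(t)=-1/(t^2\ln 2)<0$ for every $\lambda$, the function $h$ is strictly concave and therefore has at most two roots; hence at most two of $x_1^\ast,x_2^\ast,x_3^\ast$ are distinct, i.e.\ at least two of them coincide.

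Next I would enumerate the candidates. A feasible $3$-vector with two equal entries has the form $\{p,p,q\}$ with $2p+q=1$ and $2p^2+q^2=\gamma$; this quadratic has the two roots $p=\tfrac13\pm\tfrac16\sqrt{6\gamma-2}$, giving precisely two candidate spectra: the ``spike'' $\{\tfrac13+u,\,\tfrac13-\tfrac u2,\,\tfrac13-\tfrac u2\}$ and the ``plateau'' $\{\tfrac13+\tfrac u2,\,\tfrac13+\tfrac u2,\,\tfrac13-u\}$, where $u:=\sqrt{\tfrac23(\gamma-\tfrac13)}\ge 0$; the first of these is exactly (\ref{lamme:max3solution}). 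I would then rule out a maximizer on the boundary: a vanishing entry reduces the problem to the two-outcome distribution $\tfrac12(1\pm\sqrt{2\gamma-1})$, but such a point is never a local maximum, because $-t\log_2 t$ has infinite slope at $t=0$, so moving the zero entry positively along the constraint curve (which is nondegenerate for $\gamma\in[\tfrac12,1)$) strictly increases $S$; the edge cases $\gamma=\tfrac13$ and $\gamma=1$ are covered by (\ref{lamme:max3solution}) directly, and for $\gamma<\tfrac12$ no feasible point has a zero entry.

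It then remains to show the spike beats the plateau for all admissible $\gamma$ (the plateau being feasible only for $u\le\tfrac13$). Let $\phi(u)$ be the von Neumann entropy of the spike minus that of the plateau; then $\phi(0)=0$, and since the entrywise $u$-derivatives sum to zero the constant $1/\ln2$ terms cancel upon differentiation, leaving
\begin{equation}
\phi'(u)=\log_2\frac{2-3u}{2+6u}-\log_2\frac{2-6u}{2+3u}=\log_2\frac{4-9u^2}{4-36u^2}>0
\qquad\text{for }u\in(0,\tfrac13),
\end{equation}
because there $0<4-36u^2<4-9u^2$. Hence $\phi\ge 0$ throughout, so the spike is the global maximizer, which is (\ref{lamme:max3solution}).

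The Lagrange/concavity step is the soft part; it collapses the problem to two explicit vectors plus a boundary point that is cheaply discarded. The step I expect to be the real obstacle is the final comparison --- proving that at fixed purity ``one large eigenvalue and the rest equal'' carries strictly more entropy than ``all but one equal and large, one small'' --- and the clean cancellation producing the one-line formula for $\phi'$ is the crucial simplification; one also has to be a little careful with the boundary case $x_3=0$ and the admissible range of $\gamma$. The same template (Lagrange $\Rightarrow$ at most two distinct eigenvalues $\Rightarrow$ compare finitely many candidates) is what I would expect to reproduce the general-$d$ spectra in (\ref{Eq:solutionmaxd})--(\ref{Eq:solutionmind}).
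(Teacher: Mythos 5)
Your proof is correct, but it takes a genuinely different route from the paper's. The paper works directly on the constraint surface: it eliminates $dx_1,dx_2$ in favor of $dx_3$ using the differentials of the two constraints, writes $dS$ as $(x_2-x_3)\bigl[-\tfrac{\log x_1-\log x_2}{x_1-x_2}+\tfrac{\log x_3-\log x_2}{x_3-x_2}\bigr]dx_3$, and uses the secant-slope inequality for the (concave) logarithm to conclude $dS/dx_3\ge 0$ under the ordering $x_1\ge x_2\ge x_3$; the maximum is therefore attained where $x_3$ is as large as possible, i.e.\ at $x_2=x_3$, and the constraints then force Eq.~(\ref{lamme:max3solution}). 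You instead use the classical Lagrange-multiplier argument: strict concavity of $h(t)=\log_2 t+2\lambda t-c$ limits interior critical points to at most two distinct eigenvalues, which together with the constraints leaves exactly the ``spike'' and ``plateau'' candidates; you then discard boundary points via the infinite slope of $-t\log t$ at $t=0$ and settle the spike-versus-plateau comparison by the monotonicity of $\phi(u)$, whose derivative you correctly compute as $\log_2\frac{4-9u^2}{4-36u^2}>0$. Both arguments are sound (your feasibility ranges, the $\gamma<1/2$ exclusion of zero entries, and the edge cases $\gamma=\tfrac13,1$ all check out; the only soft spot is calling the constraint curve ``nondegenerate'' at $\gamma=\tfrac12$, where it is actually tangent to $x_3=0$, but the $-\epsilon\log\epsilon$ gain still dominates there, or that point is covered by the plateau comparison). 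The trade-off: the paper's single monotonicity-in-$x_3$ statement is shorter here and does double duty, since the same inequality $dS/dx_3\ge0$ immediately yields the minimization Lemma~2 ($x_1=x_2$ or $x_3=0$), whereas your route needs the extra spike/plateau comparison and boundary analysis; on the other hand, your ``at most two distinct eigenvalues'' reduction is more standard, handles existence explicitly via compactness, and applies verbatim in dimension $d$ — though note that for general $d$ the two-value ansatz leaves a family of candidates indexed by the multiplicity, so reproducing Theorems~1 and~2 this way still requires a comparison across multiplicities, which the paper instead avoids by its three-coordinate exchange argument.
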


\begin{proof}
The differential of the entropy function $S(\rho)$ and the constraints are given by
\begin{equation}\label{Eq:}
\begin{aligned}
 dS&= -(1+\log x_1)dx_1-(1+\log x_2)dx_2-(1+\log x_3)dx_3\nonumber\\
 \end{aligned}
\end{equation}
and
\begin{equation}\label{Eq:ConstraintDiff}
\begin{aligned}
 &x_1dx_1+x_2dx_2+x_3dx_3 = 0\nonumber\\
 &dx_1+dx_2+dx_3=0,
  \end{aligned}
\end{equation}
respectively. We rewrite Eq.~\eqref{Eq:ConstraintDiff} to
\begin{equation}\label{Eq:}
\begin{aligned}
 dx_1=&-\frac{(x_3-x_2)}{x_1-x_2}dx_3,\nonumber\\
 dx_2=&-\frac{(x_1-x_3)}{x_1-x_2}dx_3.
  \end{aligned}
\end{equation}
Thus, the differential of the entropy function becomes
\begin{equation}
\begin{aligned}
dS(\rho)&=\frac{dx_3}{x_1-x_2}[(x_3-x_2)\log x_1 +(x_1-x_3)\log x_2 +(x_2-x_1)\log x_3]\nonumber\\
&= (x_2-x_3)\left[-\frac{\log x_1-\log x_2}{x_1-x_2}+\frac{\log x_3-\log x_2}{x_3-x_2}\right]dx_3
\end{aligned}
\end{equation}
Since the function $\log x$ is convex for $x\in[0,1]$,
for $x_1\ge x_2\ge x_3$,
\begin{equation}
\frac{\log x_1-\log x_2}{x_1-x_2} \le \frac{\log x_3-\log x_2}{x_3-x_2}.\nonumber
\end{equation}
Thus, $dS(\rho)/dx_3\ge 0$. To reach the maximum of $S(\rho)$, we thus only need to set $x_3$ to be its maximum, which happens when $x_2=x_3$. Together with the constraints, then we can solve the equations and show that the solution to the maximization problem is given in Eq.~\eqref{lamme:max3solution}.
\end{proof}

Now, we can solve the maximization problem of Eq.~\eqref{Eq:maxd} for a general case of $d$.
\begin{theorem}\label{theorem:max}
Suppose $x_1\ge x_2 \ge \dots x_d$, the solution to the maximization problem in Eq.~\eqref{Eq:maxd} is
\begin{equation}\label{Eq:solutionmaxd}
\begin{aligned}
  x_1 &= \frac{1}{d}+\sqrt{\frac{d-1}{d}\left(\gamma-\frac{1}{d}\right)},\\
  x_2 &=x_3=\dots=x_d = \frac{1-x_1}{d-1}.
\end{aligned}
\end{equation}
\end{theorem}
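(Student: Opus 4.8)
The plan is to lift Lemma~\ref{lamme:max3} to general $d$ by an exchange (coordinate‑restriction) argument. Since the objective and both constraints in Eq.~\eqref{Eq:maxd} are symmetric under permutations of the $x_i$, imposing $x_1\ge\cdots\ge x_d$ loses nothing, and the feasible set is compact and nonempty exactly when $1/d\le\gamma\le1$, so a maximizer exists. The case $d=2$ is trivial (the two equality constraints together with the ordering pin $x$ down, and one checks directly that this forced point coincides with Eq.~\eqref{Eq:solutionmaxd}), and $d=3$ is precisely Lemma~\ref{lamme:max3}; so assume $d\ge4$ and let $(x_1,\dots,x_d)$ be any maximizer.

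First I would show $x_2=x_3=\cdots=x_d$. Fix $i\in\{1,\dots,d-2\}$, freeze every coordinate except the three consecutive ones $x_i\ge x_{i+1}\ge x_{i+2}$, and set $c_1=x_i+x_{i+1}+x_{i+2}$, $c_2=x_i^2+x_{i+1}^2+x_{i+2}^2$. Because any triple with the same sum $c_1$ and sum of squares $c_2$ (nonnegative, re‑sorted if needed) can be substituted back to give another feasible global point with the same constraint values, maximality of $(x_1,\dots,x_d)$ forces $(x_i,x_{i+1},x_{i+2})$ to maximize $-\sum_{j=i}^{i+2}x_j\log x_j$ at fixed $(c_1,c_2)$. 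Rescaling by $z_j=x_j/c_1$ turns this into an instance of Eq.~\eqref{Eq:maxd} in three variables with purity $\gamma'=c_2/c_1^2\in[1/3,1]$ (Cauchy--Schwarz), so by Lemma~\ref{lamme:max3} its unique optimizer has the two \emph{smallest} entries equal (the lemma's largest value is $\tfrac13+\sqrt{\tfrac23(\gamma'-\tfrac13)}\ge\tfrac13$, so it is $x_2=x_3$ that coincide; and at the endpoints $\gamma'=1/3,1$ the feasible set is a single point, which still has its two smallest entries equal). Hence $x_{i+1}=x_{i+2}$. Letting $i$ run over $1,\dots,d-2$ and chaining the equalities gives $x_2=\cdots=x_d$.

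Next I would solve the remaining one‑parameter family: writing $t=x_2=\cdots=x_d$ and $x_1=1-(d-1)t$, the purity constraint $x_1^2+(d-1)t^2=\gamma$ becomes $d(d-1)t^2-2(d-1)t+(1-\gamma)=0$, with roots $t=\tfrac1d\pm\tfrac1d\sqrt{(d\gamma-1)/(d-1)}$. Only the ``$-$'' root is compatible with $x_1\ge x_2$ (the ``$+$'' root forces $x_1<1/d<t$), and substituting it back gives exactly $x_1=\tfrac1d+\sqrt{\tfrac{d-1}{d}(\gamma-\tfrac1d)}$ and $x_2=\cdots=x_d=(1-x_1)/(d-1)$, i.e.\ Eq.~\eqref{Eq:solutionmaxd}; one then checks $0\le x_i\le1$ holds throughout $1/d\le\gamma\le1$. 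Since every maximizer has this form, it is the unique one.

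The step I expect to be most delicate is the reduction to the three‑variable problem: one must verify that the restricted triple problem is a genuine instance of Lemma~\ref{lamme:max3} after rescaling, treat the degenerate endpoints $\gamma'\in\{1/3,1\}$ (where the triple is already constant or has two vanishing entries) so that the conclusion $x_{i+1}=x_{i+2}$ survives in all cases, and invoke the lemma in the orientation that equates the two smaller entries rather than the two larger ones. An alternative would be a direct Lagrange‑multiplier argument — stationarity forces $-\log x_i-2\lambda x_i$ to be constant on the support of $x$, and strict convexity of $u\mapsto-\log u-2\lambda u$ bounds each level set by two points, so the support takes at most two distinct values — but one would still have to fix the multiplicities and exclude configurations with too few nonzero entries, which is exactly what the reduction to $d=3$ accomplishes more transparently.
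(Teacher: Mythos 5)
Your proposal is correct and follows essentially the same route as the paper: freeze all coordinates except a triple, rescale so the restricted problem becomes an instance of Lemma~\ref{lamme:max3}, and conclude that the smaller entries must coincide (the paper does this by contradiction on the triple $(x_1,x_2,x_d)$, you by chaining consecutive triples). Your explicit solution of the resulting two-value system, including selecting the root compatible with $x_1\ge x_2$, merely fills in algebra the paper leaves implicit.
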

\begin{proof}
The solution in Eq.~\eqref{Eq:solutionmaxd} is exactly determined when setting $x_2=x_3=\dots=x_d$. Suppose the maximization problem solution is not this one, then we must have that $x_2>x_d$. In the following, we prove the contradiction by showing that changing the values of $x_1, x_2, x_d$ would make the entropy $S(\rho)$ larger, while fixing all other values ($x_3, x_4, \dots, x_{d-1}$) and the constraints. Now the constraints for $x_1$, $x_2$, and $x_d$ becomes
\begin{equation}\label{Eq:cons12d}
\begin{aligned}
&x_1^2+x_2^2+x_d^2 = a\nonumber\\
 &x_1+x_2+x_d=b.
 \end{aligned}
\end{equation}
By defining $x_1' = x_1/b$, $x_2' = x_2/b$, $x_d' = x_d/b$, the relations become
\begin{equation}\label{Eq:cons12dp}
\begin{aligned}
x_1'^2+x_2'^2+x_d'^2 &= a/b^2\nonumber\\
 x_1'+x_2'+x_d'&=1.
 \end{aligned}
\end{equation}
The entropy function is
\begin{equation}\label{Eq:}
\begin{aligned}
S(\rho)&=-\sum_{i=1}^d x_i\log(x_i),\nonumber\\
 &=S_{1,2,d}(\rho) +S_r(\rho),
 \end{aligned}
\end{equation}
where $S_{1,2,d}(\rho) = -x_1\log(x_1)-x_2\log(x_2)-x_d\log(x_d)$ and $S_r(\rho)=-\sum_{i=3}^{d-1} x_i\log(x_i)$. Since $S_r(\rho)$ is fixed, we need to maximize $S_{1,2,d}(\rho)$, which can also be represented as
\begin{equation}\label{Eq:}
\begin{aligned}
S_{1,2,d}(\rho) &= -bx_1'\log(bx_1')-bx_2'\log(bx_2')-bx_d'\log(bx_d')\nonumber\\
&= b[-x_1'\log(x_1')-x_2'\log(x_2')-x_d'\log(x_d')]-b\log b\\
 \end{aligned}
\end{equation}
Denoting $S_{1,2,d}'(\rho) = -x_1'\log(x_1')-x_2'\log(x_2')-x_d'\log(x_d')$, this optimization problem has the same form of Eq.~\eqref{Eq:max3}. Then Lemma \ref{lamme:max3} indicates that the maximum of $S_{1,2,d}'(\rho)$ given the constraints in Eq.~\eqref{Eq:cons12dp} is reached when $x_2'=x_d'$. In other words, the maximum of $S_{1,2,d}(\rho)$ given the constrains of Eq.~\eqref{Eq:cons12d} is saturated with $x_2=x_d$, which contradicts with $x_2>x_d$. Therefore, the solution to the maximization problem is given by Eq.~\eqref{Eq:solutionmaxd}.
\end{proof}

\subsection{Minimization}
Now, we consider the solution to the minimization of Eq.~\eqref{Eq:maxd}.  Similarly, we first consider the minimization with $d=3$ and $x_1\ge x_2\ge x_3$,
\begin{equation}\label{Eq:min3}
\begin{aligned}
\min S(\rho)&= -x_1\log(x_1)-x_2\log(x_2)-x_3\log(x_3)\\
s.t.~~ &x_1^2+x_2^2+x_3^2 = \gamma\\
 &x_1+x_2+x_3=1\\
&1\ge x_1\ge x_2\ge x_3\ge 0. \\
 \end{aligned}
\end{equation}
\begin{lemma}\label{lamme:min3}
The solution to the minimization problem in Eq.~\eqref{Eq:min3} is reached either when $x_1=x_2$ or $x_3=0$.
\end{lemma}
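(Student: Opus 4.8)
The plan is to reuse the differential computation already carried out for Lemma~\ref{lamme:max3}. On the one‑dimensional constraint set $\{x_1+x_2+x_3=1,\ x_1^2+x_2^2+x_3^2=\gamma\}$, at any point with $x_1>x_2>x_3$ that computation gives
\[
\frac{dS}{dx_3}=(x_2-x_3)\!\left[\frac{\log x_2-\log x_3}{x_2-x_3}-\frac{\log x_1-\log x_2}{x_1-x_2}\right]\ge 0 ,
\]
the bracket being non‑negative because the difference quotient of the concave function $\log$ decreases as its arguments grow, and strictly positive unless $x_2=x_3$. So $S$ is non‑decreasing along the feasible curve as the smallest eigenvalue $x_3$ increases, and the minimum of $S$ over the feasible region $\{x_1\ge x_2\ge x_3\ge 0\}$ is attained at the feasible point where $x_3$ is smallest.

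First I would close the argument locally. Suppose a minimizer has $x_1>x_2$ \emph{and} $x_3>0$, so that neither alternative of the lemma holds. Decreasing $x_3$ infinitesimally, the constraint differentials from the proof of Lemma~\ref{lamme:max3} give $dx_1=\frac{x_2-x_3}{x_1-x_2}\,dx_3$ and $dx_2=-\frac{x_1-x_3}{x_1-x_2}\,dx_3$; a one‑line check then shows that along this motion $x_3$ stays positive, $x_2-x_3$ can only grow, and $x_1-x_2$ can only shrink, so the point remains feasible until $x_3=0$ or $x_1=x_2$ first becomes active. Integrating $dS/dx_3\ge 0$ — strictly positive on the path since $x_2>x_3$ there — over this decrease makes $S$ strictly smaller than at the alleged minimizer, a contradiction. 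The boundary case $x_2=x_3>0$, $x_1>x_2$ is handled the same way, since an arbitrarily small decrease of $x_3$ re‑enters the open region $x_1>x_2>x_3>0$.

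Alternatively one can argue globally: the feasible set is the circle intersected with three half‑spaces, hence a single closed arc whose two endpoints each saturate one of $x_1=x_2$, $x_2=x_3$, $x_3=0$. Solving the pair of quadratic constraints on each of those hyperplanes shows that for $1/3<\gamma<1$ exactly two of the candidate points are feasible and distinct — always the $x_2=x_3$ point, together with the $x_1=x_2$ point when $\gamma\le\tfrac12$ or the $x_3=0$ point when $\gamma\ge\tfrac12$ — and that the $x_2=x_3$ point carries the strictly largest value of $x_3$. By the first paragraph the minimum therefore sits at the other endpoint, i.e.\ at $x_1=x_2$ or $x_3=0$; the degenerate values $\gamma=\tfrac13$ and $\gamma=1$ collapse the arc to a single point already of the stated form. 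I expect the only delicate step to be this endpoint bookkeeping — connectedness of the feasible arc, correctly matching its endpoints to the hyperplanes, and the coincidences at $\gamma=\tfrac13,\tfrac12,1$ — which is exactly what the local exchange argument above avoids; everything else is a direct reuse of the identity from Lemma~\ref{lamme:max3}.
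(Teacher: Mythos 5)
Your proposal is correct and follows essentially the same route as the paper: both reuse the computation $dS/dx_3\ge 0$ from the proof of Lemma~\ref{lamme:max3} and conclude that the minimum must sit at the feasible point where $x_3$ is smallest. The paper finishes by explicitly bounding the admissible $x_3$ via $2(x_1^2+x_2^2)\ge(x_1+x_2)^2$ (with equality iff $x_1=x_2$), giving $x_3\ge\max\{0,(1-\sqrt{6\gamma-2})/3\}$ — which is precisely your ``global'' endpoint bookkeeping — while your primary local perturbation argument is a harmless variant of that same closing step.
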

\begin{proof}
From the proof of Lemma \ref{lamme:max3}, we already showed that $dS(\rho)/dx_3\ge 0$. Therefore, the lower bound of $S(\rho)$ is reached when $x_3$ takes its minimum. As $2(x_1^2+x_2^2)\ge (x_1+x_2)^2$, according to Eq.~\eqref{Eq:min3}, we have
\begin{equation}\label{Eq:}
\begin{aligned}
2(\gamma-x_3^2)\ge (1-x_3)^2.\nonumber
 \end{aligned}
\end{equation}
The lower bound for $x_3$ is
\begin{equation}
\begin{aligned}
x_3\ge \max \left\{0, \frac{1-\sqrt{6\gamma - 2}}{3}\right\}\nonumber
 \end{aligned}
\end{equation}
Thus, when $\gamma \ge 1/2$, the minimal possible value for $x_3$ is 0. When $1/3\le \gamma < 1/2$, the minimal possible value for $x_3$ is $\frac{1-\sqrt{6\gamma - 2}}{3}$ and $x_1=x_2=(1-x_3)/2$. Note that $\gamma\ge1/3$ for $d=3$.
\end{proof}

Now, we can show the general solution to the minimization of Eq.~\eqref{Eq:maxd}.
\begin{theorem}\label{theorem:min}
Suppose $x_1\ge x_2 \ge \dots x_k$, the solution to the minimization problem in Eq.~\eqref{Eq:maxd} is
\begin{equation}\label{Eq:solutionmind}
\begin{aligned}
x_1=x_2=\dots=x_{k-1}&=\frac{1-\alpha}{k-1},\\
x_{k} &= \alpha,\\
x_{k+1}=\dots=x_d &= 0.
\end{aligned}
\end{equation}
Here,
\begin{equation}
  \alpha=\frac{1}{k} - \sqrt{ (1-1/k)(\gamma-1/k)}
\end{equation}
and $k$ is the integer such that $ \frac{1}{k} \le  \gamma \le \frac{1}{k-1}$.
\end{theorem}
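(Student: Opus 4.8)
The plan is to mirror the proof of Theorem~\ref{theorem:max}: combine Lemma~\ref{lamme:min3} with a three--coordinate exchange argument to force any minimizer into the ``staircase'' shape of Eq.~\eqref{Eq:solutionmind}, and then read off the length $k$ of the staircase from the purity constraint. Since the feasible set $\{x:x_i\ge0,\ \sum_i x_i=1,\ \sum_i x_i^2=\gamma\}$ is compact and $S$ is continuous, a minimizer $x^\star$ exists; fix one and, after relabelling, assume $x^\star_1\ge x^\star_2\ge\dots\ge x^\star_d$.

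First I would record the local optimality of $x^\star$ on an arbitrary triple. Freezing all but three coordinates $x_p,x_q,x_r$, the triple $(x^\star_p,x^\star_q,x^\star_r)$ minimizes $-\sum_{i\in\{p,q,r\}}x_i\log x_i$ subject to the induced linear and quadratic constraints $\sum x_i=b$, $\sum x_i^2=a$, because any admissible perturbation of these three entries keeps $x^\star$ feasible and alters $S$ only through this term. The rescaling $y_i=x_i/b$ used in the proof of Theorem~\ref{theorem:max} turns this into the $d=3$ problem~\eqref{Eq:min3}, so Lemma~\ref{lamme:min3} applies: at the minimizer the two largest of the three chosen entries are equal, or the smallest of the three vanishes. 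Moreover, since the proof of Lemma~\ref{lamme:max3} yields the strict inequality $dS/dx_3>0$ whenever $x_3$ lies strictly below $x_1,x_2$, any triple satisfying \emph{neither} alternative has strictly larger cost and therefore cannot occur in $x^\star$.

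Next I would upgrade this to the global shape. If $x^\star$ had three distinct positive values $v_1>v_2>v_3>0$, the triple $(v_1,v_2,v_3)$ would satisfy neither alternative --- contradiction; so $x^\star$ has at most two distinct positive values, say $t>\alpha>0$ (when there is only one, set $\alpha=t$). The smaller value $\alpha$ must have multiplicity one, since otherwise the triple $(t,\alpha,\alpha)$ again fails both alternatives. Hence for some $m\le d$ one has $x^\star_1=\dots=x^\star_{m-1}=t\ge x^\star_m=\alpha\ge0$ and $x^\star_{m+1}=\dots=x^\star_d=0$, with $(m-1)t+\alpha=1$ and $(m-1)t^2+\alpha^2=\gamma$ (the degenerate cases $\alpha=0$ and $\alpha=t$ being the uniform distributions on $m-1$ or on $m$ points). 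Solving these two equations for the smaller root gives $\alpha=\tfrac1m-\sqrt{(1-\tfrac1m)(\gamma-\tfrac1m)}$ and $t=(1-\alpha)/(m-1)$, and reality of the radical together with $\alpha\ge0$ is precisely $\tfrac1m\le\gamma\le\tfrac1{m-1}$, which pins down $m=k$ for the given $\gamma$ (at the shared endpoint the two admissible values of $m$ describe the same state). Substituting $m=k$ reproduces Eq.~\eqref{Eq:solutionmind}, and since only one distribution of this shape is feasible for the given purity the minimizer is unique.

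I expect the main obstacle to be the exchange step itself: one must check that Lemma~\ref{lamme:min3} is being applied to the genuinely induced subproblem (the correct constants $a,b$, and with the ordering constraint dropped harmlessly because $S$ is permutation--symmetric), and that its two alternatives really do exclude every non--staircase local pattern. The clean way to secure the latter is to use the \emph{strict} monotonicity $dS/dx_3>0$ from the proof of Lemma~\ref{lamme:max3}, so that a triple failing both alternatives is strictly suboptimal rather than merely non--canonical; once that is in place, the rest is a routine repetition of the $d=3$ analysis together with the short computation that solves the two constraints for $\alpha$ and $t$.
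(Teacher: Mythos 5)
Your proposal follows essentially the same route as the paper's own proof: a three-coordinate exchange argument that reduces, after the rescaling used in Theorem~\ref{theorem:max}, to Lemma~\ref{lamme:min3}, forcing the minimizer into the staircase shape of Eq.~\eqref{Eq:solutionmind}, after which the two constraints determine $\alpha$ and the window $\tfrac1k\le\gamma\le\tfrac1{k-1}$ pins down $k$. One small caveat: at a triple $(t,\alpha,\alpha)$ the derivative $dS/dx_3$ vanishes (since $x_2=x_3$) rather than being strictly positive, so strict suboptimality there should be argued from monotonicity of $S$ along the whole feasible arc (the value at $x_3=x_2$ strictly exceeds the value at the minimal $x_3$) rather than from the pointwise derivative---a detail the paper's own terse proof glosses over as well.
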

\begin{proof}
Suppose we always have the solution in the form as
\begin{equation}\label{Eq:solutionPossible}
\begin{aligned}
  x_1=x_2=\dots=x_{k-1}, x_{k}, x_{k+1}=\dots x_d = 0.\nonumber
\end{aligned}
\end{equation}
Otherwise, there must exist three $x_i,x_j,x_k$ such that $x_i> x_j\ge x_k$ and $x_k\ne 0$. Following a similar argument in the proof of Theorem \ref{theorem:max}, we can show that this contradicts  Lemma \ref{lamme:min3}.

According to Eq.~\eqref{Eq:solutionPossible}, we have
\begin{equation}\label{Eq:}
\begin{aligned}
  (k-1)x_1^2 + x_k^2 &= \gamma,\nonumber\\
  (k-1)x_1+x_{k}&=1,\\
  k&\le d
\end{aligned}
\end{equation}
We can show that the possible integer value for $k$ is unique. That is,
\begin{equation}\label{Eq:}
\begin{aligned}
 k[(k-1)x_1^2 + x_k^2] &\ge [  (k-1)x_1+x_{k}]^2\nonumber\\
 &\ge (k-1)[(k-1)x_1^2 + x_k^2]
 \end{aligned}
\end{equation}
Equivalently, we have
\begin{equation}\label{Eq:}
\begin{aligned}
 k\gamma \ge 1  \ge (k-1)\gamma,\nonumber
 \end{aligned}
\end{equation}
hence
\begin{equation}\label{Eq:}
\begin{aligned}
\frac{1}{\gamma}\le &k  \le  \frac{1}{\gamma} + 1,\nonumber\\
 \frac{1}{k} \le  &\gamma \le \frac{1}{k-1}.
 \end{aligned}
\end{equation}

\end{proof}

\subsection{Upper and lower bounds to coherence and entanglement}
We now call $\{x_{i,\rho}^{\text{M}}\}, \{x_{i,\rho}^{\text{m}}\}$ the the vectors solving the maximization and the minimization, respectively.  By minimizing (maximizing) the coherence of the dephased state $\rho_d=\sum_i\ket{i}\!\!\bra{i}\rho\ket{i}\!\!\bra{i}$, and maximizing (minimizing) the coherence of the state under study, we obtain lower (upper) bounds to the relative entropy of coherence:\\
\emph{Result 1} --- The relative entropy of coherence $C_{\mathrm{RE}}(\rho)$ is  bounded as follows:
\begin{eqnarray}\label{Eq:Estimationlb}
l_c(\rho)\leq C_{\mathrm{RE}}(\rho)\leq u_c(\rho), 
\end{eqnarray}
{\footnotesize
\begin{eqnarray}
 &&l_c(\rho)= -(1-x^{\text{m}}_{k_{\rho_d},\rho_d}) \log x^{\text{m}}_{1,\rho_d}-x^{\text{m}}_{k_{\rho_d},\rho_d} \log x^{\text{m}}_{k_{\rho_d},\rho_d}\nonumber\\
  &+& \frac{(d-1)(1-x^{\text{M}}_{1,\rho})}{d} \log\frac{(1-x^{\text{M}}_{1,\rho})}{d} + x^{\text{M}}_{1,\rho} \log{x^{\text{M}}_{1,\rho}},\nonumber\\
 &&u_c(\gamma_{\rho,\rho_d})=  (1-x^{\text{m}}_{k_{\rho},\rho}) \log x^{\text{m}}_{1,\rho} +x^{\text{m}}_{k_{\rho},\rho} \log x^{\text{m}}_{k_{\rho},\rho}\nonumber\\
 &-&  \frac{(d-1)(1-x^{\text{M}}_{1,\rho_d})}{d} \log \frac{(1-x^{\text{M}}_{1,\rho_d})}{d} - x^{\text{M}}_{1,\rho_d} \log x^{\text{M}}_{1,\rho_d}\nonumber.
\end{eqnarray}
}
On the same hand, given a bipartite state $\rho_{AB}$, by minimizing (maximizing) the marginal purity on $B$ subsystem and maximizing (minimizing) the global purity, one has \\
\emph{Result 2.}--- Given a quantum state $\rho_{AB}\in \mathcal{H}_{d_A}\otimes \mathcal{H}_{d_B}$, and defining $\rho_B = \mathrm{Tr}_A\rho_{AB}$, its coherent information $I(A\rangle B)$ is bounded as follows: 
  \begin{eqnarray}  \label{thebound}
 l_e(\rho)\leq I(A\rangle B)\leq u_e(\rho),
 \end{eqnarray} 
{\footnotesize
 \begin{eqnarray} 
 &&l_e(\rho)= \nonumber\\
&-&(1-x^{\text{m}}_{k_{\rho_B},\rho_B}) \log x^{\text{m}}_{1,\rho_B} 
 -x^{\text{m}}_{k_{\rho_B},\rho_B} \log  x^{\text{m}}_{k_{\rho_B},\rho_B}\nonumber\\
  &+& \frac{(d-1)(1-x^{\text{M}}_{1,\rho_{AB}})}{d} \log \frac{(1-x^{\text{M}}_{1,\rho_{AB}})}{d} + x^{\text{M}}_{1,\rho_{AB}} \log {x^{\text{M}}_{1,\rho_{AB}}},\nonumber\\
&&u_c(\gamma_{\rho_{AB},\rho_B})=\nonumber\\
&& (1-x^{\text{m}}_{k_{\rho_{AB}},\rho_{AB}}) \log x^{\text{m}}_{1,\rho_{AB}} +x^{\text{m}}_{k_{\rho_{AB}},\rho_{AB}} \log  x^{\text{m}}_{k_{\rho_{AB}},\rho_{AB}}\nonumber\\
 &-&  \frac{(d-1)(1-x^{\text{M}}_{1,\rho_B})}{d} \log \frac{(1-x^{\text{M}}_{1,\rho_B})}{d} - x^{\text{M}}_{1,\rho_B} \log x^{\text{M}}_{1,\rho_B}\nonumber.
\end{eqnarray}
}

\end{document}